\documentclass[journal]{IEEEtran}\usepackage[stretch=50,shrink=50]{microtype}

\usepackage{epsfig,amsmath,amssymb,epsf,amsthm,scalefnt,multirow,bm,mathtools,stfloats,stmaryrd}
\usepackage{xcolor}
\usepackage{float}
\usepackage{cite}
\usepackage{psfrag}
\usepackage{algorithm}
\usepackage{algpseudocode}

\newtheorem{lemma}{Lemma}
\newtheorem*{lemma*}{Lemma}

\newtheorem{corollary}{Corollary}

\newcounter{mytempeqncnt}
\DeclareMathOperator*{\argmax}{argmax}

\algrenewcommand\algorithmicrequire{\textbf{Input:}}
\algrenewcommand\algorithmicensure{\textbf{Output:}}
\algdef{SE}[DOWHILE]{Do}{doWhile}{\algorithmicdo}[1]{\algorithmicwhile\ #1}

\begin{document}

\title{Spatial Wideband Channel Estimation for MmWave Massive MIMO Systems with Hybrid Architectures and Low-Resolution ADCs}

\author{In-soo Kim and Junil Choi
\thanks{The authors are with the School of Electrical Engineering, KAIST, Daejeon, Korea (e-mail: insookim@kaist.ac.kr; junil@kaist.ac.kr).}
\thanks{This work was partly supported by Institute of Information \& Communications Technology Planning \& Evaluation (IITP) grant funded by the Korea government (MSIT) (No. 2016-0-00123, Development of Integer-Forcing MIMO Transceivers for 5G \& Beyond Mobile Communication Systems), the MSIT (Ministry of Science and ICT), Korea, under the ITRC (Information Technology Research Center) support program (IITP-2020-0-01787) supervised by the IITP (Institute of Information \& Communications Technology Planning \& Evaluation), and the National Research Foundation (NRF) grant funded by the MSIT of the Korea government (2019R1C1C1003638).}}

\maketitle

\begin{abstract}
In this paper, a channel estimator for wideband millimeter wave (mmWave) massive multiple-input multiple-output (MIMO) systems with hybrid architectures and low-resolution analog-to-digital converters (ADCs) is proposed. To account for the propagation delay across the antenna array, which cannot be neglected in wideband mmWave massive MIMO systems, the discrete time channel that models the spatial wideband effect is developed. Also, the training signal design that addresses inter-frame, inter-user, and inter-symbol interferences is investigated when the spatial wideband effect is not negligible. To estimate the channel parameters over the continuum based on the maximum a posteriori (MAP) criterion, the Newtonized fully corrective forward greedy selection-cross validation-based (NFCFGS-CV-based) channel estimator is proposed. NFCFGS-CV is a gridless compressed sensing (CS) algorithm, whose termination condition is determined by the CV technique. The CV-based termination condition is proved to achieve the minimum squared error (SE). The simulation results show that NFCFGS-CV outperforms state-of-the-art on-grid CS-based channel estimators.
\end{abstract}

\section{Introduction}
Millimeter wave (mmWave) communications in the range of $30$-$300$ GHz and massive multiple-input multiple-output (MIMO) are promising technologies that offer high data rate and significant beamforming gain \cite{6894453}. The power consumption of mmWave massive MIMO systems, however, becomes impractical when a large number of radio frequency (RF) chains with unlimited hardware constraints are deployed. To alleviate the power consumption, one possible solution is to reduce the number of RF chains by using hybrid architectures \cite{7370753, 8030501}. In addition, using low-resolution analog-to-digital converters (ADCs) can simplify the hardware design, thereby reducing the power consumption \cite{7307134, 7894211, 7420605}.

In \cite{8171203, 8320852, 8310593, 8789658}, channel estimators were proposed for MIMO systems with low-resolution ADCs using on-grid compressed sensing (CS) algorithms. In particular, on-grid CS algorithms estimate continuous parameters (that are not guaranteed to be) on the grid, so these channel estimators suffer from the off-grid error, which occurs when the parameters of interest are off the grid. The generalized approximate message passing (GAMP) and vector AMP (VAMP) algorithms \cite{8171203} are belief propagation-based on-grid CS algorithms with low complexity, but these algorithms are sensitive to ill-conditioned sensing matrices. The generalized expectation consistent-signal recovery (GEC-SR) algorithm proposed in \cite{8320852} is a turbo principle-based on-grid CS algorithm. In \cite{8310593}, the channel estimation problem was formulated based on the sparse Bayesian learning (SBL) framework, which was solved using the variational Bayesian (VB) method. GEC-SR and VB-SBL are relatively robust to ill-conditioned sensing matrices, but matrix inversion is required, which may be infeasible due to the large matrix size in practice. The fully corrective forward greedy selection-cross validation (FCFGS-CV) algorithm proposed in \cite{8789658} is an orthogonal matching pursuit-based (OMP-based) on-grid CS algorithm, whose termination condition is determined by the CV technique. The relationship between CV and the optimal termination condition, however, was not analyzed in \cite{8789658}.

To combat the off-grid error, gridless CS algorithms were proposed for channel estimation in \cite{7491265, 8515242}. In essence, gridless CS algorithms estimate the parameters of interest on the infinite-resolution grid, thereby eliminating the off-grid error. The Newtonized OMP (NOMP) algorithm \cite{7491265} modifies the OMP algorithm \cite{4385788} by refining the estimated frequency over the continuum using Newton's method. NOMP, however, is applicable only to high-resolution ADCs. The atomic norm minimization-based channel estimator was proposed in \cite{8515242} for mixed one-bit ADCs, but tuning the hyperparameter of the atomic norm minimization requires high-resolution ADCs.

The prior work on channel estimation for MIMO systems with low-resolution ADCs, however, did not consider the spatial wideband effect. In wideband mmWave massive MIMO systems, the propagation delay across the antenna array cannot be neglected due to the large array and small sampling period, which is a phenomenon called the spatial wideband effect \cite{8354789}. In \cite{8354789, 8647494}, the beam squint effect, which is the frequency domain manifestation of the spatial wideband effect, was mainly analyzed because channel estimation was performed in orthogonal frequency division multiplexing (OFDM) systems. For low-resolution ADCs, however, single-carrier systems are mainly considered because the orthogonality of OFDM systems cannot be preserved \cite{7472305}, implying that the spatial wideband effect should be analyzed in the time domain. The spatial wideband effect in the time domain was considered in the context of RF signal processing \cite{1069}, but the impact of the spatial wideband effect on wireless communications, such as the new channel model or the requirements on the training signal design, has not been studied before.

In this paper, the Newtonized FCFGS-CV-based (NFCFGS-CV-based) channel estimator for wideband mmWave massive MIMO systems with hybrid architectures and low-resolution ADCs is proposed. To account for the propagation delay across the antenna array, the discrete time channel is formulated based on the spatial wideband effect. In addition, the training signal design that mitigates inter-frame, inter-user, and inter-symbol interferences is addressed when the spatial wideband effect is not negligible. Then, the channel estimation problem is developed using the maximum a posteriori (MAP) criterion. To estimate the parameters of the channel over the continuum, the NFCFGS algorithm is proposed, which is a new gridless CS algorithm that combines NOMP \cite{7491265} and FCFGS \cite{doi:10.1137/090759574} that were originally developed separately. NOMP is a gridless CS algorithm for high-resolution ADCs, whereas FCFGS is an on-grid CS algorithm for low-resolution ADCs. The proposed NFCFGS, therefore, can be interpreted as gridless FCFGS, whose gridless property is adopted from NOMP. To determine the termination condition that achieves the minimum squared error (SE), the CV technique \cite{4301267} is applied. In particular, the quality of the estimate is assessed based on the CV function, which is the log-likelihood function of the CV data. The CV data is a portion of the received signal that is excluded from channel estimation, and reserved solely for the assessment of the estimation quality. The analysis on CV shows that the CV function is proportional to the SE, so using CV as an indicator of termination is justified. Then, the performance of NFCFGS-CV is evaluated in the simulation results.

This paper is organized as follows. The discrete time channel with the spatial wideband effect is formulated in Section \ref{section_2}. In Section \ref{section_3}, the system model with hybrid architectures and low-resolution ADCs is described to illustrate how the training signals are transmitted and received. The NFCFGS-CV-based channel estimator is proposed in Section \ref{section_4}, whose CV technique is analyzed in the asymptotic regime when the number of the CV data is sufficiently large. The performance of NFCFGS-CV is evaluated based on the simulation results in Section \ref{section_5}. The conclusion follows in Section \ref{section_6}.

\textbf{Notation:} $a$, $\mathbf{a}$, and $\mathbf{A}$ denote a scalar, vector, and matrix. The $i$-th element of $\mathbf{a}$ is $a_{i}$, and the transpose of the $i$-th row of $\mathbf{A}$ is $\mathbf{a}_{i}$. The row restriction of $\mathbf{A}$ to the index set $\mathcal{I}$ is $\mathbf{A}_{\mathcal{I}}$. The $2$-norm of $\mathbf{a}$ is $\|\mathbf{a}\|$. The vectorization of $\mathbf{A}$ is $\mathrm{vec}(\mathbf{A})$. The Kronecker product of $\mathbf{A}$ and $\mathbf{B}$ is $\mathbf{A}\otimes\mathbf{B}$. $\mathbf{A}\succ\mathbf{B}$ implies that $\mathbf{A}-\mathbf{B}$ is positive definite. $\llbracket n\rrbracket$ denotes $\llbracket n\rrbracket=\{1, \dots, n\}$.

\section{Channel with Spatial Wideband Effect}\label{section_2}
Assume a single-cell uplink massive MIMO system with a base station and $K$ single-antenna users. The base station is equipped with a uniform linear array (ULA) of $M$ antennas. The system operates under the single-carrier transmission in the mmWave wideband with the carrier frequency $f_{c}$, carrier wavelength $\lambda_{c}$, bandwidth $W$, and sampling period $T_{s}$, which means that $\lambda_{c}=c/f_{c}$ and $T_{s}=1/W$ where $c$ is the speed of light. In this section, the channel that accounts for the spatial wideband effect is formulated.

Assume that the channel is formed by $L_{k}$ paths between the base station and $k$-th user where the $\ell$-th path is associated with the path gain $\bar{\alpha}_{k, \ell}\in\mathbb{C}$, angle-of-arrival (AoA) $\theta_{k, \ell}\in[-\pi/2, \pi/2]$, and delay $\tau_{k, \ell}\in[0, (D-1)T_{s}]$ where $D\in\mathbb{N}$ defines the delay spread. Then, the propagation delay of the $\ell$-th path from the $k$-th user to the $m$-th antenna is
\begin{equation}\label{propagation_delay}
\tau_{k, \ell, m}=\tau_{k, \ell}+(m-1)\frac{\mathsf{d}\sin(\theta_{k, \ell})}{c}
\end{equation}
as shown in Fig. \ref{figure_1} where $\mathsf{d}$ is the antenna spacing. Therefore, the received signal at the $m$-th antenna from the $k$-th user at time $t$ is
\begin{align}
r_{k, m}(t)&=\sum_{\ell=1}^{L_{k}}\bar{\alpha}_{k, \ell}e^{-j2\pi f_{c}\tau_{k, \ell, m}}s_{k}(t-\tau_{k, \ell, m})\notag\\
           &=\sum_{\ell=1}^{L_{k}}\alpha_{k, \ell}e^{-j2\pi f_{c}(m-1)\frac{\mathsf{d}\sin(\theta_{k, \ell})}{c}}s_{k}(t-\tau_{k, \ell, m})
\end{align}
where $\alpha_{k, \ell}=\bar{\alpha}_{k, \ell}e^{-j2\pi f_{c}\tau_{k, \ell}}$ is the equivalent path gain, $s_{k}(t)$ is the training signal of the $k$-th user at time $t$ defined as
\begin{equation}
s_{k}(t)=\sum_{i=-\infty}^{\infty}s_{k}[i]p(t-iT_{s}),
\end{equation}
$s_{k}[i]\in\mathbb{C}$ represents the $i$-th training symbol of the $k$-th user with a transmit power constraint $\mathbb{E}\{|s_{k}[i]|^{2}\}\leq\rho_{k}$, and $p(t)$ is the pulse shaping filter. Then, the received signal at the $m$-th antenna from the $K$ users at time $t$ is
\begin{equation}\label{received_signal_with_noise}
r_{m}(t)=\sum_{k=1}^{K}r_{k, m}(t)+\bar{v}_{m}(t)
\end{equation}
where $\bar{v}_{m}(t)$ is the zero-mean additive white Gaussian noise (AWGN) at the $m$-th antenna at time $t$. As a shorthand notation, denote the total number of paths as $L=L_{1}+\cdots+L_{K}$.

\begin{figure}[t]
\centering
\includegraphics[width=1\columnwidth]{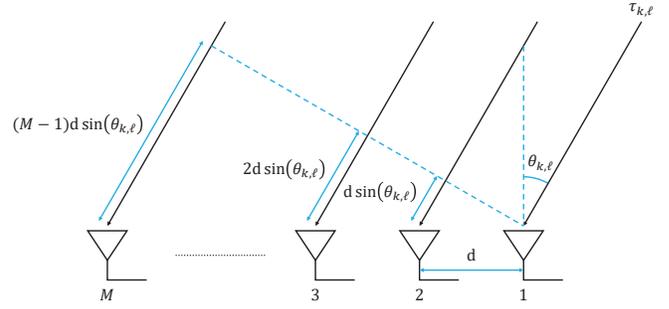}
\caption[caption]{The propagation delay across the ULA with $M$ antennas, which results from the large array and small sampling period, incurs a phenomenon called the spatial wideband effect.}\label{figure_1}
\end{figure}

The fact that should be emphasized is that in narrowband small-scale MIMO systems, the propagation delay across the antenna array was omitted, namely $s_{k}(t-\tau_{k, \ell, m})\approx s_{k}(t-\tau_{k, \ell})$, due to the small array and large sampling period. In wideband mmWave massive MIMO systems where the array is large and sampling period is small, however, the propagation delay across the antenna array cannot be neglected, which is a phenomenon called the spatial wideband effect \cite{8354789}. The spatial wideband effect, which was once overlooked in wireless communications, was recently studied extensively in \cite{8354789, 8647494} due to the emergence of mmWave communications.

Now, the discrete time channel that accounts for the spatial wideband effect is formulated, which was not considered in the prior work to the best of our knowledge. Consider the sampled received signal at the $m$-th antenna from the $k$-th user at time $nT_{s}$, which is
\begin{align}\label{received_signal}
r_{k, m}[n]\overset{\hphantom{(a)}}{=}&r_{k, m}(nT_{s})\notag\\
           \overset{\hphantom{(a)}}{=}&\sum_{\ell=1}^{L_{k}}\sum_{i=-\infty}^{\infty}\alpha_{k, \ell}e^{-j2\pi f_{c}(m-1)\frac{\mathsf{d}\sin(\theta_{k, \ell})}{c}}\times\notag\\
                                      &p((n-i)T_{s}-\tau_{k, \ell, m})s_{k}[i]\notag\\
                      \overset{(a)}{=}&\sum_{\ell=1}^{L_{k}}\sum_{d=-\infty}^{\infty}\alpha_{k, \ell}e^{-j2\pi f_{c}(m-1)\frac{\mathsf{d}\sin(\theta_{k, \ell})}{c}}\times\notag\\
                                      &p(dT_{s}-\tau_{k, \ell, m})s_{k}[n-d]
\end{align}
where (a) is due to the change of variables $i=n-d$. In practice, $p(t)$ decays rapidly outside the mainlobe, whose interval is typically defined as $\{t\in\mathbb{R}\mid t\in(-T_{s}, T_{s})\}$. Therefore, \eqref{received_signal} can be approximated by retaining only the terms corresponding to $\{d\in\mathbb{Z}\mid dT_{s}-\tau_{k, \ell, m}\in(-T_{s}, T_{s})\}$ \cite{tse2005fundamentals}. From the fact that $\tau_{k, \ell, m}\in[-(M-1)\mathsf{d}/c, (D-1)T_{s}+(M-1)\mathsf{d}/c]$, which follows from \eqref{propagation_delay} and $\tau_{k, \ell}\in[0, (D-1)T_{s}]$, we conclude that the terms corresponding to
\begin{equation}\label{channel_tap}
\mathcal{D}=\left\{d\in\mathbb{Z}\mid d\in\left(-1-\frac{(M-1)\frac{\mathsf{d}}{c}}{T_{s}}, D+\frac{(M-1)\frac{\mathsf{d}}{c}}{T_{s}}\right)\right\}
\end{equation}
are sufficient to approximate \eqref{received_signal}, which gives
\begin{align}\label{approximation}
r_{k, m}[n]\approx&\sum_{d\in\mathcal{D}}\sum_{\ell=1}^{L_{k}}\alpha_{k, \ell}e^{-j2\pi f_{c}(m-1)\frac{\mathsf{d}\sin(\theta_{k, \ell})}{c}}\times\notag\\
                  &p(dT_{s}-\tau_{k, \ell, m})s_{k}[n-d].
\end{align}
Before moving on, define
\begin{equation}\label{loup}
\begin{aligned}
&D_{\mathrm{lo}}=\min\mathcal{D}=-\lceil(M-1)\mathsf{d}/(cT_{s})\rceil,\\
&D_{\mathrm{up}}=\max\mathcal{D}=D-1+\lceil(M-1)\mathsf{d}/(cT_{s})\rceil
\end{aligned}
\end{equation}
for notational simplicity.

From \eqref{approximation}, the $d$-th channel tap of the $k$-th user over the $M$ antennas is obtained as
\begin{align}\label{spatial_wideband_effect}
\mathbf{h}_{k}[d]&=\sum_{\ell=1}^{L_{k}}\alpha_{k, \ell}\underbrace{\begin{bmatrix}p(dT_{s}-\tau_{k, \ell, 1})\\\vdots\\e^{-j2\pi f_{c}(M-1)\frac{\mathsf{d}\sin(\theta_{k, \ell})}{c}}p(dT_{s}-\tau_{k, \ell, M})\end{bmatrix}}_{=\mathbf{a}_{d}(\theta_{k, \ell}, \tau_{k, \ell})\in\mathbb{C}^{M}}\notag\\
                 &=\underbrace{\begin{bmatrix}\mathbf{a}_{d}(\theta_{k, 1}, \tau_{k, 1})&\cdots&\mathbf{a}_{d}(\theta_{k, L_{k}}, \tau_{k, L_{k}})\end{bmatrix}}_{=\mathbf{F}_{k}[d]\in\mathbb{C}^{M\times L_{k}}}\underbrace{\begin{bmatrix}\alpha_{k, 1}\\\vdots\\\alpha_{k, L_{k}}\end{bmatrix}}_{=\bm{\alpha}_{k}\in\mathbb{C}^{L_{k}}}.
\end{align}
Then, the sampled received signal over the $M$ antennas from the $K$ users at time $nT_{s}$ is given by \eqref{received_signal_with_noise}, \eqref{approximation}, and \eqref{spatial_wideband_effect} as
\begin{align}\label{H}
\mathbf{r}[n]&=\sum_{d\in\mathcal{D}}\underbrace{\begin{bmatrix}\mathbf{h}_{1}[d]&\cdots&\mathbf{h}_{K}[d]\end{bmatrix}}_{=\mathbf{H}[d]\in\mathbb{C}^{M\times K}}\underbrace{\begin{bmatrix}s_{1}[n-d]\\\vdots\\s_{K}[n-d]\end{bmatrix}}_{=\mathbf{s}[n-d]\in\mathbb{C}^{K}}+\bar{\mathbf{v}}[n]\notag\\
             &=\underbrace{\begin{bmatrix}\mathbf{H}[D_{\mathrm{lo}}]&\cdots&\mathbf{H}[D_{\mathrm{up}}]\end{bmatrix}}_{=\mathbf{H}\in\mathbb{C}^{M\times |\mathcal{D}|K}}\underbrace{\begin{bmatrix}\mathbf{s}[n-D_{\mathrm{lo}}]\\\vdots\\\mathbf{s}[n-D_{\mathrm{up}}]\end{bmatrix}}_{=\mathbf{s_{n}}\in\mathbb{C}^{|\mathcal{D}|K}}+\bar{\mathbf{v}}[n]
\end{align}
where $\bar{\mathbf{v}}[n]\sim\mathcal{CN}(\mathbf{0}_{M}, \mathbf{I}_{M})$ is the sampled AWGN over the $M$ antennas at time $nT_{s}$, and the signal-to-noise ratio (SNR) is defined as $1/K\cdot(\rho_{1}+\cdots+\rho_{K})$. From now on, the discrete time channel with the spatial wideband effect, namely $\mathbf{H}$, is considered throughout the paper. The parameters of $\mathbf{H}$ are assumed to be independent and identically distributed (i.i.d.) as
\begin{equation}\label{path_gain_aoa_delay}
\begin{aligned}
\alpha_{k, \ell}&\sim\mathcal{CN}(0, 1),\\
\theta_{k, \ell}&\sim\mathrm{Uniform}([-\pi/2, \pi/2]),\\
  \tau_{k, \ell}&\sim\mathrm{Uniform}([0, (D-1)T_{s}])
\end{aligned}
\end{equation}
for all $(k, \ell)$.

\textbf{Remark 1:} The spatial wideband channel has two distinct properties. First, there are $2\lceil(M-1)\mathsf{d}/(cT_{s})\rceil$ more channel taps compared to the number of channel taps that the narrowband channel has, namely $D$. This can be checked from \eqref{channel_tap} and \eqref{loup}. Second, each antenna receives different samples of $p(t)$, which is evident from \eqref{spatial_wideband_effect}. In narrowband small-scale MIMO systems where the spatial wideband effect is negligible due to the small array and large sampling period, the spatial wideband channel reduces to the conventional channel \cite{7400949}, which can be verified from the fact that $(M-1)\mathsf{d}/(cT_{s})\approx 0$ and $p(dT_{s}-\tau_{k, \ell, m})\approx p(dT_{s}-\tau_{k, \ell})$.

\section{System Model}\label{section_3}
In this section, the system model that describes how the training signals are transmitted and received is formulated based on the discussions in the previous section. The base station employs a hybrid architecture with $R$ RF chains where $R\leq M$. To reduce the power consumption at the base station, each RF chain is equipped with a pair of $B$-bit ADCs as illustrated in Fig. \ref{figure_2}.

\begin{figure}[t]
\centering
\includegraphics[width=1\columnwidth]{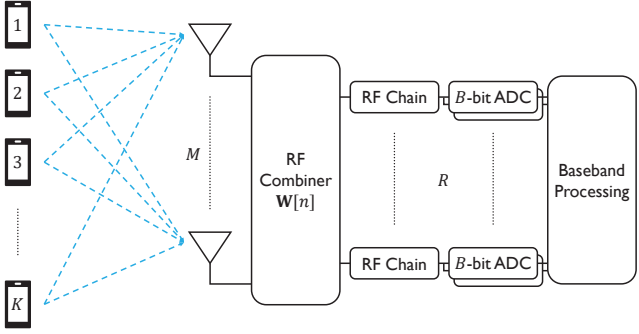}
\caption[caption]{A single-cell uplink between a base station with a hybrid architecture and $K$ single-antenna users. The base station is equipped with a ULA of $M$ antennas and $R$ RF chains where a pair of $B$-bit ADCs are deployed at each RF chain.}\label{figure_2}
\end{figure}

The RF-combined received signal $\mathbf{y}[n]\in\mathbb{C}^{R}$ at time $n$ is
\begin{align}\label{h}
\mathbf{y}[n]&\overset{\hphantom{(a)}}{=}\mathbf{W}[n]^{\mathrm{H}}\mathbf{r}[n]\notag\\
             &\overset{\hphantom{(a)}}{=}\mathbf{W}[n]^{\mathrm{H}}\mathbf{H}\mathbf{s}_{n}+\mathbf{W}[n]^{\mathrm{H}}\bar{\mathbf{v}}[n]\notag\\
             &\overset{(a)}{=}(\underbrace{\mathbf{s}_{n}^{\mathrm{T}}\otimes\mathbf{W}[n]^{\mathrm{H}}}_{=\bar{\mathbf{A}}[n]})\underbrace{\mathrm{vec}(\mathbf{H})}_{=\mathbf{h}}+\underbrace{\mathbf{W}[n]^{\mathrm{H}}\bar{\mathbf{v}}[n]}_{=\mathbf{v}[n]}
\end{align}
where (a) follows from $\mathrm{vec}(\mathbf{A}\mathbf{B}\mathbf{C})=(\mathbf{C}^{\mathrm{T}}\otimes\mathbf{A})\mathrm{vec}(\mathbf{B})$, and $\mathbf{W}[n]\in\mathbb{C}^{M\times R}$ is the RF combiner at time $n$ with a network of phase shifters. The phase shifters are constrained to satisfy $\mathbf{W}[n]^{\mathrm{H}}\mathbf{W}[n]=\mathbf{I}_{R}$, so $\mathbf{v}[n]\sim\mathcal{CN}(\mathbf{0}_{R}, \mathbf{I}_{R})$.

The training signals are transmitted in $N_{\mathrm{t}}$ frames where each frame consists of $N_{\mathrm{f}}$ symbols. Since the phase shifters cannot be reconfigured at each $n$, $\mathbf{W}[n]$ is fixed at each frame \cite{7961152}. In addition, prefix guard intervals of length $N_{\mathrm{p}}$ symbols and suffix guard intervals of length $N_{\mathrm{s}}$ symbols are appended to each frame to avoid inter-frame interference. In contrast, no suffix guard intervals are required in the conventional channel model. The details of the frame, prefix, and suffix design along with the training signal design that mitigates inter-frame, inter-user, and inter-symbol interferences in the presence of the spatial wideband effect are discussed in Remarks 2, 3, and 4. For notational convenience, denote time $n$ that corresponds to the $n_{\mathrm{f}}$-th symbol of the $n_{\mathrm{t}}$-th frame as $(n_{\mathrm{t}}, n_{\mathrm{f}})$. For example, $\mathbf{y}[n]$ and $\mathbf{s}_{n}$ that correspond to the $n_{\mathrm{f}}$-th symbol of the $n_{\mathrm{t}}$-th frame are written as $\mathbf{y}[(n_{\mathrm{t}}, n_{\mathrm{f}})]$ and $\mathbf{s}_{(n_{\mathrm{t}}, n_{\mathrm{f}})}$. An overview of the frame structure is shown in Fig. \ref{figure_3}. Before moving on, denote the length of the channel estimation phase as $N=N_{\mathrm{t}}N_{\mathrm{f}}$.

The RF-combined received signal $\mathbf{y}$ over the channel estimation phase of length $N$ is
\begin{equation}\label{unquantized_received_signal}
\mathbf{y}=\begin{bmatrix}\mathbf{y}[(1, 1)]\\\vdots\\\mathbf{y}[(N_{\mathrm{t}}, N_{\mathrm{f}})]\end{bmatrix}=\underbrace{\begin{bmatrix}\bar{\mathbf{A}}[(1, 1)]\\\vdots\\\bar{\mathbf{A}}[(N_{\mathrm{t}}, N_{\mathrm{f}})]\end{bmatrix}}_{=\bar{\mathbf{A}}\in\mathbb{C}^{RN\times M|\mathcal{D}|K}}\mathbf{h}+\underbrace{\begin{bmatrix}\mathbf{v}[(1, 1)]\\\vdots\\\mathbf{v}[(N_{\mathrm{t}}, N_{\mathrm{f}})]\end{bmatrix}}_{=\mathbf{v}\in\mathbb{C}^{RN}}.
\end{equation}
Then, $\mathbf{y}$ is quantized by a pair of $B$-bit ADCs at each RF chain. The quantized received signal $\hat{\mathbf{y}}$ over the channel estimation phase of length $N$ is
\begin{equation}
\hat{\mathbf{y}}=\mathrm{Q}(\mathbf{y})
\end{equation}
where $\mathrm{Q}(\cdot)$ is the $B$-bit quantization function applied elementwise as
\begin{equation}\label{quantization}
\hat{y}=\mathrm{Q}(y)\iff\begin{cases}\mathrm{Re}(y)\in[\mathrm{Re}(\hat{y}^{\mathrm{lo}}), \mathrm{Re}(\hat{y}^{\mathrm{up}}))\\\mathrm{Im}(y)\in[\mathrm{Im}(\hat{y}^{\mathrm{lo}}), \mathrm{Im}(\hat{y}^{\mathrm{up}}))\end{cases},
\end{equation}
and $\hat{y}^{\mathrm{lo}}\in\mathbb{C}$ and $\hat{y}^{\mathrm{up}}\in\mathbb{C}$ are the lower and upper thresholds associated with $\hat{y}\in\mathbb{C}$. The real and imaginary parts of $\hat{y}^{\mathrm{lo}}$, $\hat{y}^{\mathrm{up}}$, and $\hat{y}$ correspond to one of the $2^{B}$ quantization intervals. For notational simplicity, denote $\mathcal{Q}$ as the collection of the $2^{B}$ quantization points, namely $\mathrm{Re}(\hat{y})\in\mathcal{Q}$ and $\mathrm{Im}(\hat{y})\in\mathcal{Q}$.

To estimate $\mathbf{h}$ from $\hat{\mathbf{y}}$, express $\mathbf{h}$ as
\begin{equation}
\mathbf{h}=\underbrace{\begin{bmatrix}\mathbf{F}_{1}[D_{\mathrm{lo}}]&&\\&\ddots&\\&&\mathbf{F}_{K}[D_{\mathrm{lo}}]\\&\vdots&\\\mathbf{F}_{1}[D_{\mathrm{up}}]&&\\&\ddots&\\&&\mathbf{F}_{K}[D_{\mathrm{up}}]\end{bmatrix}}_{=\mathbf{F}(\mathcal{P})\in\mathbb{C}^{M|\mathcal{D}|K\times L}}\underbrace{\begin{bmatrix}\bm{\alpha}_{1}\\\vdots\\\bm{\alpha}_{K}\end{bmatrix}}_{=\bm{\alpha}\in\mathbb{C}^{L}},
\end{equation}
which follows from the definition of $\mathbf{h}$ in \eqref{spatial_wideband_effect}, \eqref{H}, and \eqref{h}, and $\mathcal{P}$ is the collection of all $(\theta_{k, \ell}, \tau_{k, \ell})$. Then, \eqref{unquantized_received_signal} becomes
\begin{equation}\label{modified_unquantized_received_signal}
\mathbf{y}=\underbrace{\bar{\mathbf{A}}\mathbf{F}(\mathcal{P})}_{=\mathbf{A}(\mathcal{P})}\bm{\alpha}+\mathbf{v}
\end{equation}
where $\mathbf{A}(\mathcal{P})\in\mathbb{C}^{RN\times L}$.

Now, the goal is to estimate $(\bm{\alpha}, \mathcal{P})$ from $\hat{\mathbf{y}}$. Using the MAP criterion, the path gain estimation problem can be formulated as a convex optimization problem for fixed $\mathcal{P}$ \cite{7439790}. The channel estimation problem, however, is highly nonconvex because $\mathbf{F}(\mathcal{P})$ is nonlinear with respect to $\mathcal{P}$. Furthermore, the number of paths is unknown in practice, which complicates the situation. To deal with the channel estimation problem, the NFCFGS-CV-based channel estimator is developed in Section \ref{section_4}.

\begin{figure}[t]
\centering
\includegraphics[width=1\columnwidth]{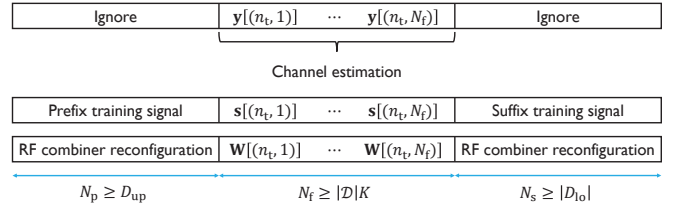}
\caption[caption]{The structure of a frame with prefix and suffix guard intervals. The RF combiner is fixed at each frame, namely $\mathbf{W}[(n_{\mathrm{t}}, 1)]=\cdots=\mathbf{W}[(n_{\mathrm{t}}, N_{\mathrm{f}})]$. In the presence of the spatial wideband effect, $\mathbf{y}[(n_{\mathrm{t}}, n_{\mathrm{f}})]$ receives delayed copies of $\mathbf{s}[(n_{\mathrm{t}}, n_{\mathrm{f}}-D_{\mathrm{lo}})]$, \dots, $\mathbf{s}[(n_{\mathrm{t}}, n_{\mathrm{f}}-D_{\mathrm{up}})]$, so the requirements on $N_{\mathrm{p}}$ and $N_{\mathrm{s}}$ to avoid inter-frame interference become $N_{\mathrm{p}}\geq D_{\mathrm{up}}\geq D-1$ and $N_{\mathrm{s}}\geq|D_{\mathrm{lo}}|\geq 0$. In the narrowband channel, on the other hand, $D_{\mathrm{lo}}$ and $D_{\mathrm{up}}$ reduces to $0$ and $(D-1)$, so no suffix guard intervals are required.}\label{figure_3}
\end{figure}

\textbf{Remark 2:} In the conventional channel model, prefix guard intervals of length $(D-1)$ are sufficient to avoid inter-frame interference. To avoid inter-frame interference in the presence of the spatial wideband effect, however, the requirements on $N_{\mathrm{p}}$ and $N_{\mathrm{s}}$ become $N_{\mathrm{p}}\geq D_{\mathrm{up}}$ and $N_{\mathrm{s}}\geq|D_{\mathrm{lo}}|$, as evident from \eqref{channel_tap} and \eqref{loup}. The overhead of the guard intervals is small when $N_{\mathrm{f}}\gg\max\{N_{\mathrm{p}}, N_{\mathrm{s}}\}$.

\textbf{Remark 3:} Consider the RF-combined received signal at the $n_{\mathrm{t}}$-th frame, which is
\begin{align}
\begin{bmatrix}\mathbf{y}[(n_{\mathrm{t}}, 1)]\\\vdots\\\mathbf{y}[(n_{\mathrm{t}}, N_{\mathrm{f}})]\end{bmatrix}=&\left(\vphantom{\begin{bmatrix}\mathbf{s}_{(n_{\mathrm{t}}, 1)}^{\mathrm{T}}\\\vdots\\\mathbf{s}_{(n_{\mathrm{t}}, N_{\mathrm{f}})}^{\mathrm{T}}\end{bmatrix}}\right.\underbrace{\begin{bmatrix}\mathbf{s}_{(n_{\mathrm{t}}, 1)}^{\mathrm{T}}\\\vdots\\\mathbf{s}_{(n_{\mathrm{t}}, N_{\mathrm{f}})}^{\mathrm{T}}\end{bmatrix}}_{=\mathbf{S}}\otimes\mathbf{W}[(n_{\mathrm{t}}, 1)]^{\mathrm{H}}\left.\vphantom{\begin{bmatrix}\mathbf{s}_{(n_{\mathrm{t}}, 1)}^{\mathrm{T}}\\\vdots\\\mathbf{s}_{(n_{\mathrm{t}}, N_{\mathrm{f}})}^{\mathrm{T}}\end{bmatrix}}\right)\mathbf{h}\notag\\
                                                                                                                  &+\begin{bmatrix}\mathbf{v}[(n_{\mathrm{t}}, 1)]\\\vdots\\\mathbf{v}[(n_{\mathrm{t}}, N_{\mathrm{f}})]\end{bmatrix}
\end{align}
where $\mathbf{S}\in\mathbb{C}^{N_{\mathrm{f}}\times|\mathcal{D}|K}$. In addition, assume that the lengths of prefix and suffix guard intervals satisfy the requirements in Remark 2, which implies that there is no inter-frame interference. Since the $(k+K(d-1))$-th column of $\mathbf{S}$ corresponds to the training signal of the $k$-th user associated with the $d$-th channel tap for $k\in\llbracket K\rrbracket$ and $d\in\llbracket|\mathcal{D}|\rrbracket$, which follows from \eqref{H}, the columns of $\mathbf{S}$ are required to be orthogonal for the inter-user and inter-symbol interferences to be mitigated \cite{tse2005fundamentals}. A necessary condition for $\mathbf{S}$ to have orthogonal columns in the presence of the spatial wideband effect is $N_{\mathrm{f}}\geq|\mathcal{D}|K$. In the conventional channel model, however, the requirement reduces to $N_{\mathrm{f}}\geq DK$. The frame structure and the requirements on the frame, prefix, and suffix discussed in Remarks 2 and 3 are shown in Fig. \ref{figure_3}.

\textbf{Remark 4:} Assume that the frame, prefix, and suffix satisfy the requirements in Remarks 2 and 3. To make the columns of $\mathbf{S}$ orthogonal when the spatial wideband effect is present, one possible training signal design is as follows. First, constrain the prefix and suffix training signals to be the cyclic copies of the training signals in the frame. Then, the $i$-th and $(i+Kd)$-th columns of $\mathbf{S}$ are related by $d$ circular shifts, as evident from \eqref{H}. Finally, assign a $|\mathcal{D}|k$ circularly shifted Zadoff-Chu (ZC) sequence of length $N_{\mathrm{f}}$ to the $k$-th user. Then, the columns of $\mathbf{S}$ are guaranteed to be orthogonal because circularly shifted ZC sequences are orthogonal.

\section{Proposed NFCFGS-CV Algorithm}\label{section_4}
In this section, the NFCFGS-CV-based channel estimator is proposed. In particular, the quantized received signal is partitioned to the estimation and CV data. Then, NFCFGS runs single path estimation using the estimation data, while the termination condition is checked using the CV data at each iteration. To simplify the development, the discussions on the partitioning of the data and CV-based termination condition are deferred to Section \ref{section_4c}, and NFCFGS is proposed in Sections \ref{section_4a} and \ref{section_4b}. For notational convenience, the real forms of a matrix $\mathbf{X}$ and vector $\mathbf{x}$ are denoted with the subscript $\mathrm{R}$ as
\begin{equation}\label{real_form}
\mathbf{X}_{\mathrm{R}}=\begin{bmatrix}\mathrm{Re}(\mathbf{X})&-\mathrm{Im}(\mathbf{X})\\\mathrm{Im}(\mathbf{X})&\mathrm{Re}(\mathbf{X})\end{bmatrix}\text{, }\mathbf{x}_{\mathrm{R}}=\begin{bmatrix}\mathrm{Re}(\mathbf{x})\\\mathrm{Im}(\mathbf{x})\end{bmatrix}.
\end{equation}

\subsection{Proposed NFCFGS Algorithm: Single Path Estimation}\label{section_4a}
The FCFGS algorithm \cite{doi:10.1137/090759574} is an on-grid CS algorithm that generalizes the OMP algorithm \cite{4385788} to optimization problems with sparsity-constrained convex objective functions. In this subsection, the NFCFGS algorithm is proposed for single path estimation, which is a gridless CS algorithm based on FCFGS and the NOMP algorithm \cite{7491265}. Then, NFCFGS is generalized to multipath estimation in the next subsection.

The single path estimation problem considers the case where
\begin{align}
L_{i}=\begin{cases}1&\text{if }i=k\\
                   0&\text{if }i\neq k\end{cases},
\end{align}
whose goal is to estimate the path gain, AoA, and delay of the $k$-th user, while $k\in\llbracket K\rrbracket$ is unknown in advance. To develop the MAP channel estimation framework, denote the parameters of the path as $(g, \theta, \tau, k)$ where $g$, $\theta$, and $\tau$ correspond to the path gain, AoA, and delay, whose subscript $(k, \ell)$ is suppressed for notational simplicity. In addition, define $\mathbf{a}(\theta, \tau, k)\in\mathbb{C}^{RN}$ using $(\theta, \tau, k)$, which is defined using the same logic as $\mathbf{A}(\mathcal{P})$ is introduced in \eqref{modified_unquantized_received_signal} using $\mathcal{P}$. The quantized received signal for single path estimation is written as
\begin{equation}
\hat{\mathbf{y}}=\mathrm{Q}(g\mathbf{a}(\theta, \tau, k)+\mathbf{v}).
\end{equation}

To develop the likelihood function, denote the real form of $\mathbf{a}(\theta, \tau, k)$ with the subscript $\mathrm{R}$ as\footnote{The real form of $\mathbf{a}(\theta, \tau, k)$ is constructed based on the matrix rule instead of the vector rule in \eqref{real_form}.}
\begin{equation}
\mathbf{A}_{\mathrm{R}}(\theta, \tau, k)=\begin{bmatrix}\mathrm{Re}(\mathbf{a}(\theta, \tau, k))&-\mathrm{Im}(\mathbf{a}(\theta, \tau, k))\\\mathrm{Im}(\mathbf{a}(\theta, \tau, k))&\mathrm{Re}(\mathbf{a}(\theta, \tau, k))\end{bmatrix}.
\end{equation}
In addition, the lower and upper thresholds associated with $\hat{\mathbf{y}}_{\mathrm{R}}$ are denoted as
\begin{equation}
\hat{\mathbf{y}}_{\mathrm{R}}^{\mathrm{lo}}=\begin{bmatrix}\hat{y}_{\mathrm{R}, 1}^{\mathrm{lo}}\\\vdots\\\hat{y}_{\mathrm{R}, 2RN}^{\mathrm{lo}}\end{bmatrix}\text{, }\hat{\mathbf{y}}_{\mathrm{R}}^{\mathrm{up}}=\begin{bmatrix}\hat{y}_{\mathrm{R}, 1}^{\mathrm{up}}\\\vdots\\\hat{y}_{\mathrm{R}, 2RN}^{\mathrm{up}}\end{bmatrix}
\end{equation}
as in \eqref{quantization}. Since
\begin{equation}
\mathbf{y}\sim\mathcal{CN}(g\mathbf{a}(\theta, \tau, k), \mathbf{I}_{RN})
\end{equation}
conditioned on $(g, \theta, \tau, k)$, the log-likelihood function is given as \cite{5456454}
\begin{align}\label{single_path_log_likelihood_function}
\ell_{\hat{\mathbf{y}}}(g, \theta, \tau, k)=&\log\mathrm{Pr}\left[\hat{\mathbf{y}}|g, \theta, \tau, k\right]\notag\\
                                           =&\sum_{i=1}^{2RN}\log\left(\Phi\left(\frac{\hat{y}_{\mathrm{R}, i}^{\mathrm{up}}-\mathbf{a}_{\mathrm{R}, i}^{\mathrm{T}}(\theta, \tau, k)\mathbf{g}_{\mathrm{R}}}{\sqrt{1/2}}\right)\right.\notag\\
                                            &\left.-\Phi\left(\frac{\hat{y}_{\mathrm{R}, i}^{\mathrm{lo}}-\mathbf{a}_{\mathrm{R}, i}^{\mathrm{T}}(\theta, \tau, k)\mathbf{g}_{\mathrm{R}}}{\sqrt{1/2}}\right)\right)
\end{align}
where $\Phi(\cdot)$ denotes the cumulative distribution function (CDF) of $\mathcal{N}(0, 1)$.

From \eqref{path_gain_aoa_delay} with $\ell_{\hat{\mathbf{y}}}(g, \theta, \tau, k)$, the MAP channel estimator is formulated as\footnote{The constants independent of $(g, \theta, \tau)$ in the probability density functions (PDFs) of $\mathcal{CN}(0, 1)$, $\mathrm{Uniform}([-\pi/2, \pi/2])$, and $\mathrm{Uniform}([0, (D-1)T_{s}])$ are omitted without loss of generality. In addition, the log-posterior distribution is considered.}
\begin{alignat}{2}\label{p1}
&\underset{g\in\mathbb{C}, \theta, \tau, k}{\text{maximize}}\quad&&\underbrace{\ell_{\hat{\mathbf{y}}}(g, \theta, \tau, k)-|g|^{2}}_{=f_{\hat{\mathbf{y}}}(g, \theta, \tau, k)}\notag\\
&\text{subject to}\quad                                          &&\theta\in[-\pi/2, \pi/2],\notag\\
&                                                                &&\tau\in[0, (D-1)T_{s}],\notag\\
&                                                                &&k\in\llbracket K\rrbracket.\tag{P1}
\end{alignat}
From \eqref{p1}, note that $f_{\hat{\mathbf{y}}}(g, \theta, \tau, k)$ is concave with respect to $g$ because $\ell_{\hat{\mathbf{y}}}(g, \theta, \tau, k)$ has the form of $\log(\Phi(b-g)-\Phi(a-g))$ with $b>a$, which is concave, and $-|g|^{2}$ is concave \cite{boyd2004convex}. The fact, however, that $\ell_{\hat{\mathbf{y}}}(g, \theta, \tau, k)$ is nonconvex with respect to $(\theta, \tau)$ renders \eqref{p1} highly nonconvex.

To solve \eqref{p1}, the NFCFGS algorithm is proposed based on FCFGS and NOMP, which proceeds as follows. First, $(\theta, \tau, k)$ is estimated on the grid by maximizing the gradient of $f_{\hat{\mathbf{y}}}(g, \theta, \tau, k)$. Then, $(\theta, \tau)$ is refined over the continuum using Newton's method. In path gain estimation, $g$ is estimated by maximizing $f_{\hat{\mathbf{y}}}(g, \theta, \tau, k)$ using convex optimization.

To develop NFCFGS, consider the gradient of $f_{\hat{\mathbf{y}}}(g, \theta, \tau, k)$ with respect to $g$ at $g=0$ expressed in the real form, which is
\begin{align}\label{function}
\nabla_{\mathbf{g}_{\mathrm{R}}}f_{\hat{\mathbf{y}}}(0, \theta, \tau, k)=&\begin{bmatrix}\mathrm{Re}(\nabla_{g}f_{\hat{\mathbf{y}}}(0, \theta, \tau, k))\\\mathrm{Im}(\nabla_{g}f_{\hat{\mathbf{y}}}(0, \theta, \tau, k))\end{bmatrix}\notag\\
                                                                        =&\sum_{i=1}^{2RN}\frac{\phi\left(\frac{\hat{y}_{\mathrm{R}, i}^{\mathrm{up}}}{\sqrt{1/2}}\right)-\phi\left(\frac{\hat{y}_{\mathrm{R}, i}^{\mathrm{lo}}}{\sqrt{1/2}}\right)}{\Phi\left(\frac{\hat{y}_{\mathrm{R}, i}^{\mathrm{up}}}{\sqrt{1/2}}\right)-\Phi\left(\frac{\hat{y}_{\mathrm{R}, i}^{\mathrm{lo}}}{\sqrt{1/2}}\right)}\times\notag\\
                                                                         &\left(-\frac{\mathbf{a}_{\mathrm{R}, i}(\theta, \tau, k)}{\sqrt{1/2}}\right)
\end{align}
where $\phi(\cdot)$ is the PDF of $\mathcal{N}(0, 1)$. Then, $(\theta, \tau, k)$ is estimated by solving the gradient maximization problem\footnote{$g$ is treated as a nuisance parameter by setting $g=0$, which is a standard procedure in FCFGS, and the interested reader is referred to \cite{doi:10.1137/090759574}.}
\begin{alignat}{2}\label{p2}
&\underset{\theta, \tau, k}{\text{maximize}}\quad&&\underbrace{\|\nabla_{\mathbf{g}_{\mathrm{R}}}f_{\hat{\mathbf{y}}}(0, \theta, \tau, k)\|^{2}}_{=f_{\hat{\mathbf{y}}}(\theta, \tau, k)}\notag\\
&\text{subject to}\quad                          &&\theta\in[-\pi/2, \pi/2],\notag\\
&                                                &&\tau\in[0, (D-1)T_{s}],\notag\\
&                                                &&k\in\llbracket K\rrbracket.\tag{P2}
\end{alignat}
For high-resolution ADCs, \eqref{p2} reduces to OMP because \eqref{p2} maximizes the gradient of the log-likelihood function, which is equivalent to maximizing the correlation with the residual. Before moving on, observe that $f_{\hat{\mathbf{y}}}(\theta, \tau, k)$ is nonconvex with respect to $(\theta, \tau)$.

\begin{figure*}[b]
\hrulefill
\setcounter{mytempeqncnt}{\value{equation}}
\setcounter{equation}{27}
\begin{align}
&\nabla_{(\theta, \tau)}f_{\hat{\mathbf{y}}}(\theta, \tau, k)=\left(\frac{d}{d(\theta, \tau)}\|\nabla_{\mathbf{g}_{\mathrm{R}}}f_{\hat{\mathbf{y}}}(0, \theta, \tau, k)\|^{2}\right)^{\mathrm{T}}\overset{(a)}{=}2\left(\frac{d}{d(\theta, \tau)}\nabla_{\mathbf{g}_{\mathrm{R}}}f_{\hat{\mathbf{y}}}(0, \theta, \tau, k)\right)^{\mathrm{T}}\nabla_{\mathbf{g}_{\mathrm{R}}}f_{\hat{\mathbf{y}}}(0, \theta, \tau, k),\label{gradient}\\
&\nabla_{(\theta, \tau)}^{2}f_{\hat{\mathbf{y}}}(\theta, \tau, k)\notag\\
&\overset{\hphantom{(b)}}{=}\Bigg[\begin{matrix}\frac{d}{d\theta}\bigg(2\bigg(\frac{d}{d(\theta, \tau)}\nabla_{\mathbf{g}_{\mathrm{R}}}f_{\hat{\mathbf{y}}}(0, \theta, \tau, k)\bigg)^{\mathrm{T}}\nabla_{\mathbf{g}_{\mathrm{R}}}f_{\hat{\mathbf{y}}}(0, \theta, \tau, k)\bigg)&\frac{d}{d\tau}\bigg(2\bigg(\frac{d}{d(\theta, \tau)}\nabla_{\mathbf{g}_{\mathrm{R}}}f_{\hat{\mathbf{y}}}(0, \theta, \tau, k)\bigg)^{\mathrm{T}}\nabla_{\mathbf{g}_{\mathrm{R}}}f_{\hat{\mathbf{y}}}(0, \theta, \tau, k)\bigg)\end{matrix}\Bigg]\notag\\
&\overset{(b)}{=}\Bigg[\begin{matrix}2\bigg(\bigg(\frac{d}{d\theta}\bigg(\bigg(\frac{d}{d(\theta, \tau)}\nabla_{\mathbf{g}_{\mathrm{R}}}f_{\hat{\mathbf{y}}}(0, \theta, \tau, k)\bigg)^{\mathrm{T}}\bigg)\bigg)\nabla_{\mathbf{g}_{\mathrm{R}}}f_{\hat{\mathbf{y}}}(0, \theta, \tau, k)+\bigg(\frac{d}{d(\theta, \tau)}\nabla_{\mathbf{g}_{\mathrm{R}}}f_{\hat{\mathbf{y}}}(0, \theta, \tau, k)\bigg)^{\mathrm{T}}\frac{d}{d\theta}\nabla_{\mathbf{g}_{\mathrm{R}}}f_{\hat{\mathbf{y}}}(0, \theta, \tau, k)\bigg)\end{matrix}\notag\\
&\mathrel{\hphantom{\overset{\hphantom{(b)}}{=}}}\begin{matrix}2\bigg(\bigg(\frac{d}{d\tau}\bigg(\bigg(\frac{d}{d(\theta, \tau)}\nabla_{\mathbf{g}_{\mathrm{R}}}f_{\hat{\mathbf{y}}}(0, \theta, \tau, k)\bigg)^{\mathrm{T}}\bigg)\bigg)\nabla_{\mathbf{g}_{\mathrm{R}}}f_{\hat{\mathbf{y}}}(0, \theta, \tau, k)+\bigg(\frac{d}{d(\theta, \tau)}\nabla_{\mathbf{g}_{\mathrm{R}}}f_{\hat{\mathbf{y}}}(0, \theta, \tau, k)\bigg)^{\mathrm{T}}\frac{d}{d\tau}\nabla_{\mathbf{g}_{\mathrm{R}}}f_{\hat{\mathbf{y}}}(0, \theta, \tau, k)\bigg)\end{matrix}\Bigg]\label{hessian}
\end{align}
\setcounter{equation}{\value{mytempeqncnt}}
\end{figure*}

To solve \eqref{p2}, NFCFGS proceeds as follows. First, $(\theta, \tau, k)$ is estimated on the grid
\begin{equation}
\Omega\subseteq[-\pi/2, \pi/2]\times[0, (D-1)T_{s}]\times\llbracket K\rrbracket,
\end{equation}
which discretizes the constraints in \eqref{p2}. The sizes of the grids that correspond to $(\theta, \tau)$ are denoted as $G_{\mathrm{a}}=R_{\mathrm{a}}M$ and $G_{\mathrm{d}}=R_{\mathrm{d}}|\mathcal{D}|$ where $R_{\mathrm{a}}$ and $R_{\mathrm{d}}$ are the grid resolutions. In practice, $\Omega$ is configured so that the constraints in \eqref{p2} are uniformly discretized, while the grid resolutions are constrained to satisfy $R_{\mathrm{a}}\geq 1$ and $R_{\mathrm{d}}\geq 1$ \cite{7491265, 7458188, 7961152}. Then, $(\theta, \tau)$ is refined using Newton's method to reduce the off-grid error.

\setcounter{equation}{29}

To apply Newton's method, the Newton step is required, so the gradient and Hessian of $f_{\hat{\mathbf{y}}}(\theta, \tau, k)$ are derived, which are denoted as $\nabla_{(\theta, \tau)}f_{\hat{\mathbf{y}}}(\theta, \tau, k)\in\mathbb{R}^{2}$ and $\nabla_{(\theta, \tau)}^{2}f_{\hat{\mathbf{y}}}(\theta, \tau, k)\in\mathbb{R}^{2\times 2}$. The gradient is formulated as \eqref{gradient} where (a) comes from the chain rule. The Hessian is derived as \eqref{hessian}, which is obtained by applying the product rule in (b). The derivatives of $\nabla_{\mathbf{g}_{\mathrm{R}}}f_{\hat{\mathbf{y}}}(0, \theta, \tau, k)$ that are required to evaluate the gradient and Hessian are provided as
\begin{align}\label{derivative}
\frac{d^{i+j}}{d\theta^{i}d\tau^{j}}\nabla_{\mathbf{g}_{\mathrm{R}}}f_{\hat{\mathbf{y}}}(0, \theta, \tau, k)=&\sum_{n=1}^{2RN}\frac{\phi\left(\frac{\hat{y}_{\mathrm{R}, n}^{\mathrm{up}}}{\sqrt{1/2}}\right)-\phi\left(\frac{\hat{y}_{\mathrm{R}, n}^{\mathrm{lo}}}{\sqrt{1/2}}\right)}{\Phi\left(\frac{\hat{y}_{\mathrm{R}, n}^{\mathrm{up}}}{\sqrt{1/2}}\right)-\Phi\left(\frac{\hat{y}_{\mathrm{R}, n}^{\mathrm{lo}}}{\sqrt{1/2}}\right)}\times\notag\\
                                                                                                             &\frac{d^{i+j}}{d\theta^{i}d\tau^{j}}\left(-\frac{\mathbf{a}_{\mathrm{R}, n}(\theta, \tau, k)}{\sqrt{1/2}}\right).
\end{align}
Then, the gradient step and Newton step are given as \cite{boyd2004convex}
\begin{equation}
\begin{aligned}
\mathbf{g}_{\hat{\mathbf{y}}}(\theta, \tau, k)&=\nabla_{(\theta, \tau)}f_{\hat{\mathbf{y}}}(\theta, \tau, k),\\
\mathbf{n}_{\hat{\mathbf{y}}}(\theta, \tau, k)&=-\nabla_{(\theta, \tau)}^{2}f_{\hat{\mathbf{y}}}(\theta, \tau, k)^{-1}\nabla_{(\theta, \tau)}f_{\hat{\mathbf{y}}}(\theta, \tau, k),
\end{aligned}
\end{equation}
which are evaluated based on \eqref{function}, \eqref{gradient}, \eqref{hessian}, and \eqref{derivative}. From the discussions until now, NFCFGS is summarized as follows.

\textbf{AoA-delay estimation:} $(\theta, \tau, k)$ is estimated on the grid by solving \eqref{p2} as
\begin{equation}\label{on_grid_estimation}
(\hat{\theta}, \hat{\tau}, \hat{k})=\argmax_{(\theta, \tau, k)\in\Omega}f_{\hat{\mathbf{y}}}(\theta, \tau, k).
\end{equation}
Then, $(\hat{\theta}, \hat{\tau})$ is refined based on Newton's method for nonconvex functions as \cite{murphy2012machine}
\begin{align}\label{off_grid_estimation}
(\hat{\theta}, \hat{\tau})+\begin{cases}\eta\mathbf{n}_{\hat{\mathbf{y}}}(\hat{\theta}, \hat{\tau}, \hat{k})&\text{if }\nabla_{(\theta, \tau)}^{2}f_{\hat{\mathbf{y}}}(\hat{\theta}, \hat{\tau}, \hat{k})\prec\mathbf{0}_{2\times 2}\\
                                        \eta\mathbf{g}_{\hat{\mathbf{y}}}(\hat{\theta}, \hat{\tau}, \hat{k})&\text{if }\nabla_{(\theta, \tau)}^{2}f_{\hat{\mathbf{y}}}(\hat{\theta}, \hat{\tau}, \hat{k})\nprec\mathbf{0}_{2\times 2}\end{cases}
\end{align}
at each iteration where $\eta$ is the step size. The purpose of the condition in \eqref{off_grid_estimation} is to allow the Newton refinement only when $f_{\hat{\mathbf{y}}}(\theta, \tau, k)$ is locally concave, which is commonly adopted when Newton's method is applied to nonconvex functions \cite{murphy2012machine, 10.1007/BFb0067700, 7491265}.

\textbf{Path gain estimation:} $g$ is estimated by solving \eqref{p1} with fixed $(\hat{\theta}, \hat{\tau}, \hat{k})$ as
\begin{equation}
\hat{g}=\argmax_{g\in\mathbb{C}}f_{\hat{\mathbf{y}}}(g, \hat{\theta}, \hat{\tau}, \hat{k}),
\end{equation}
which is guaranteed to converge to the global optimum due to the concavity of $f_{\hat{\mathbf{y}}}(g, \theta, \tau, k)$ with respect to $g$.

\subsection{Proposed NFCFGS Algorithm: Multipath Estimation}\label{section_4b}
This subsection proposes NFCFGS for multipath estimation where single path estimation is performed at each iteration. To illustrate how NFCFGS proceeds at each iteration, define $\hat{\mathcal{P}}$ as the collection of all previously estimated $(\theta_{k, \ell}, \tau_{k, \ell})$, and $\hat{\bm{\alpha}}\in\mathbb{C}^{|\hat{\mathcal{P}}|}$ as the collection of all previously estimated $\alpha_{k, \ell}$. The goal of each iteration is to estimate the path from the $k$-th user using $(\hat{\bm{\alpha}}, \hat{\mathcal{P}})$, while $k\in\llbracket K\rrbracket$ is unknown a priori. For the sake of simplicity, assume that $(\hat{\bm{\alpha}}, \hat{\mathcal{P}})$ contains no error.

To extend single path estimation in the previous subsection to multipath estimation, denote the parameters of the path of interest as $(g, \theta, \tau, k)$. In addition, define $\mathbf{a}(\theta, \tau, k)\in\mathbb{C}^{RN}$ using $(\theta, \tau, k)$ from the same logic in the previous subsection. The quantized received signal is written as
\begin{equation}
\hat{\mathbf{y}}=\mathrm{Q}(g\mathbf{a}(\theta, \tau, k)+\underbrace{\mathbf{A}(\hat{\mathcal{P}})\hat{\bm{\alpha}}}_{=\mathbf{y}(\hat{\bm{\alpha}}, \hat{\mathcal{P}})}+\mathbf{v}),
\end{equation}
while the unquantized received signal is distributed as
\begin{equation}\label{conditional_distribution}
\mathbf{y}\sim\mathcal{CN}(g\mathbf{a}(\theta, \tau, k)+\mathbf{y}(\hat{\bm{\alpha}}, \hat{\mathcal{P}}), \mathbf{I}_{RN})
\end{equation}
conditioned on $(g, \theta, \tau, k)$ and $(\hat{\bm{\alpha}}, \hat{\mathcal{P}})$. Then, the log-likelihood function is derived from \eqref{conditional_distribution} as
\begin{align}\label{multipath_log_likelihood_function}
&\log\mathrm{Pr}\left[\hat{\mathbf{y}}|g, \theta, \tau, k, \hat{\bm{\alpha}}, \hat{\mathcal{P}}\right]\notag\\
&=\sum_{i=1}^{2RN}\log\left(\Phi\left(\frac{\hat{y}_{\mathrm{R}, i}^{\mathrm{up}}-y_{\mathrm{R}, i}(\hat{\bm{\alpha}}, \hat{\mathcal{P}})-\mathbf{a}_{\mathrm{R}, i}^{\mathrm{T}}(\theta, \tau, k)\mathbf{g}_{\mathrm{R}}}{\sqrt{1/2}}\right)\right.\notag\\
&\mathrel{\hphantom{=}}\left.-\Phi\left(\frac{\hat{y}_{\mathrm{R}, i}^{\mathrm{lo}}-y_{\mathrm{R}, i}(\hat{\bm{\alpha}}, \hat{\mathcal{P}})-\mathbf{a}_{\mathrm{R}, i}^{\mathrm{T}}(\theta, \tau, k)\mathbf{g}_{\mathrm{R}}}{\sqrt{1/2}}\right)\right),
\end{align}
or
\begin{equation}\label{identity}
\log\mathrm{Pr}\left[\hat{\mathbf{y}}|g, \theta, \tau, k, \hat{\bm{\alpha}}, \hat{\mathcal{P}}\right]=\ell_{\hat{\mathbf{y}}-\mathbf{y}(\hat{\bm{\alpha}}, \hat{\mathcal{P}})}(g, \theta, \tau, k),
\end{equation}
which can be verified by comparing \eqref{multipath_log_likelihood_function} with \eqref{single_path_log_likelihood_function}.

From the discussions until now, we illustrate how NFCFGS estimates $(\theta, \tau, k)$, while $(\hat{\bm{\alpha}}, \hat{\mathcal{P}})$ is assumed to be fixed. As in the previous subsection, consider the log-posterior distribution of $(g, \theta, \tau)$, which is derived from \eqref{identity}. The square of the magnitude of the gradient with respect to $g$ is formulated as $f_{\hat{\mathbf{y}}-\mathbf{y}(\hat{\bm{\alpha}}, \hat{\mathcal{P}})}(\theta, \tau, k)$. Therefore, $(\theta, \tau, k)$ is estimated from \eqref{p2} after $\hat{\mathbf{y}}$ is replaced with $\hat{\mathbf{y}}-\mathbf{y}(\hat{\bm{\alpha}}, \hat{\mathcal{P}})$. In particular, $(\theta, \tau, k)$ is estimated by applying \eqref{on_grid_estimation} and \eqref{off_grid_estimation} after $\hat{\mathbf{y}}$ is replaced with $\hat{\mathbf{y}}-\mathbf{y}(\hat{\bm{\alpha}}, \hat{\mathcal{P}})$. Then, NFCFGS updates $\hat{\mathcal{P}}$ as $\hat{\mathcal{P}}\cup\{(\hat{\theta}, \hat{\tau}, \hat{k})\}$.

Now, we show how the path gains of $\hat{\mathcal{P}}$ are estimated, whose collected path gains are denoted as $\mathbf{x}\in\mathbb{C}^{|\hat{\mathcal{P}}|}$. In other words, the path gains of the currently and previously estimated paths are updated. Assume that
\begin{equation}
\hat{\mathbf{y}}=\mathrm{Q}(\mathbf{A}(\hat{\mathcal{P}})\mathbf{x}+\mathbf{v})
\end{equation}
with $\mathbf{x}\sim\mathcal{CN}(\mathbf{0}_{|\mathcal{P}|}, \mathbf{I}_{|\mathcal{P}|})$, while $\hat{\mathcal{P}}$ is assumed to be fixed. The log-likelihood function is given as \cite{5456454}
\begin{align}
\ell(\mathbf{x}, \hat{\mathcal{P}})=&\log\mathrm{Pr}\left[\hat{\mathbf{y}}|\mathbf{x}, \hat{\mathcal{P}}\right]\notag\\
                                   =&\sum_{i=1}^{2RN}\log\left(\Phi\left(\frac{\hat{y}_{\mathrm{R}, i}^{\mathrm{up}}-\mathbf{a}_{\mathrm{R}, i}^{\mathrm{T}}(\hat{\mathcal{P}})\mathbf{x}_{\mathrm{R}}}{\sqrt{1/2}}\right)\right.\notag\\
                                    &\left.-\Phi\left(\frac{\hat{y}_{\mathrm{R}, i}^{\mathrm{lo}}-\mathbf{a}_{\mathrm{R}, i}^{\mathrm{T}}(\hat{\mathcal{P}})\mathbf{x}_{\mathrm{R}}}{\sqrt{1/2}}\right)\right).
\end{align}
Then, the MAP estimate of $\mathbf{x}$ is
\begin{equation}
\hat{\bm{\alpha}}=\argmax_{\mathbf{x}\in\mathbb{C}^{|\hat{\mathcal{P}}|}}(\underbrace{\ell(\mathbf{x}, \hat{\mathcal{P}})-\|\mathbf{x}\|^{2}}_{=g(\mathbf{x}, \hat{\mathcal{P}})}),
\end{equation}
whose global optimum is guaranteed due to the concavity of $g(\mathbf{x}, \hat{\mathcal{P}})$ with respect to $\mathbf{x}$.

The NFCFGS algorithm is presented in Algorithm \ref{nfcfgs_cv}, whose CV-based termination condition is explained in the next subsection, as well as the meaning of $\mathcal{E}$ and $\mathcal{CV}$ in the superscripts and subscripts. In particular, Line 5 performs on-grid AoA-delay estimation. Then, $(\hat{\theta}, \hat{\tau})$ is refined with Newton's method provided that the objective function is locally concave in Line 8, but is refined using gradient descent method otherwise in Line 10. After $\hat{\mathcal{P}}$ is updated in Line 13, path gain estimation is performed using convex optimization in Line 14.

\begin{algorithm}[t]
\caption{NFCFGS-CV algorithm}\label{nfcfgs_cv}
\begin{algorithmic}[1]
\Require $f_{\hat{\mathbf{y}}}^{\mathcal{E}}(\cdot, \cdot, \cdot)$, $\mathbf{g}_{\hat{\mathbf{y}}}^{\mathcal{E}}(\cdot, \cdot, \cdot)$, $\mathbf{n}_{\hat{\mathbf{y}}}^{\mathcal{E}}(\cdot, \cdot, \cdot)$, $g_{\mathcal{E}}(\cdot, \cdot)$, $\ell_{\mathcal{CV}}(\cdot, \cdot)$
\Ensure $(\hat{\bm{\alpha}}, \hat{\mathcal{P}})$
\State // $\ell_{\mathcal{CV}}(\hat{\bm{\alpha}}, \emptyset)=-\infty$ and $\mathbf{y}(\hat{\bm{\alpha}}, \emptyset)=\mathbf{0}_{2RN}$ by convention
\State $\hat{\mathcal{P}}\coloneqq\emptyset$
\Do
\State $\epsilon\coloneqq\ell_{\mathcal{CV}}(\hat{\bm{\alpha}}, \hat{\mathcal{P}})$
\State $(\hat{\theta}, \hat{\tau}, \hat{k})\coloneqq\displaystyle\argmax_{(\theta, \tau, k)\in\Omega}f_{\hat{\mathbf{y}}-\mathbf{y}(\hat{\bm{\alpha}}, \hat{\mathcal{P}})}^{\mathcal{E}}(\theta, \tau, k)$
\While {termination condition}
\If {$\nabla_{(\theta, \tau)}^{2}f_{\hat{\mathbf{y}}-\mathbf{y}(\hat{\bm{\alpha}}, \hat{\mathcal{P}})}^{\mathcal{E}}(\hat{\theta}, \hat{\tau}, \hat{k})\prec\mathbf{0}_{2\times 2}$}
\State $(\hat{\theta}, \hat{\tau})\coloneqq(\hat{\theta}, \hat{\tau})+\eta\mathbf{n}_{\hat{\mathbf{y}}-\mathbf{y}(\hat{\bm{\alpha}}, \hat{\mathcal{P}})}^{\mathcal{E}}(\hat{\theta}, \hat{\tau}, \hat{k})$
\Else
\State $(\hat{\theta}, \hat{\tau})\coloneqq(\hat{\theta}, \hat{\tau})+\eta\mathbf{g}_{\hat{\mathbf{y}}-\mathbf{y}(\hat{\bm{\alpha}}, \hat{\mathcal{P}})}^{\mathcal{E}}(\hat{\theta}, \hat{\tau}, \hat{k})$
\EndIf
\EndWhile
\State $\hat{\mathcal{P}}\coloneqq\hat{\mathcal{P}}\cup\{(\hat{\theta}, \hat{\tau}, \hat{k})\}$
\State $\hat{\bm{\alpha}}\coloneqq\displaystyle\argmax_{\mathbf{x}\in\mathbb{C}^{|\hat{\mathcal{P}}|}}g_{\mathcal{E}}(\mathbf{x}, \hat{\mathcal{P}})$
\doWhile $\ell_{\mathcal{CV}}(\hat{\bm{\alpha}}, \hat{\mathcal{P}})>\epsilon$
\end{algorithmic}
\end{algorithm}

\subsection{Proposed CV Technique: Termination Condition}\label{section_4c}
The knowledge on $L$ is important to determine the proper termination condition of NFCFGS. In practice, however, $L$ is difficult to acquire, which demands a more feasible termination condition. In \cite{4301267, 5319752}, the termination conditions of CS algorithms were determined based on the CV technique \cite{hastie2009elements}, which is a model validation technique that assesses the quality of the estimate to prevent overfitting. In this subsection, the CV-based termination condition is applied to NFCFGS, which gives the NFCFGS-CV-based channel estimator.

To explain the concept of NFCFGS-CV, consider the estimation data $\hat{\mathbf{y}}_{\mathcal{E}}\in\mathbb{C}^{|\mathcal{E}|}$ and CV data $\hat{\mathbf{y}}_{\mathcal{CV}}\in\mathbb{C}^{|\mathcal{CV}|}$ that partition $\hat{\mathbf{y}}$, which means that $\mathcal{E}$ and $\mathcal{CV}$ partition $\llbracket RN\rrbracket$. Then, channel estimation is performed based on the estimation data, while the estimation quality is assessed based on the CV data. The disjoint nature of the estimation and CV data enables CV to properly assess the estimation quality. For convenience, define $\mathcal{E}_{\mathrm{R}}$ and $\mathcal{CV}_{\mathrm{R}}$, which denote the real index sets of the estimation and CV data, as
\begin{equation}
\begin{aligned}
 \mathcal{E}_{\mathrm{R}}&=\{i\in\mathbb{Z}\mid i\in\mathcal{E}\cup(\mathcal{E}+RN)\},\\
\mathcal{CV}_{\mathrm{R}}&=\{i\in\mathbb{Z}\mid i\in\mathcal{CV}\cup(\mathcal{CV}+RN)\}
\end{aligned}
\end{equation}
where the addition of a scalar to a set represents an operation that adds the scalar to each element of the set.

To illustrate the details of NFCFGS-CV, define the estimation functions as
\begin{equation}\label{estimation_function}
\begin{alignedat}{2}
&f_{\hat{\mathbf{y}}}^{\mathcal{E}}(\theta, \tau, k)         &&\text{: square of the magnitude of the gradient},\\
&\mathbf{g}_{\hat{\mathbf{y}}}^{\mathcal{E}}(\theta, \tau, k)&&\text{: gradient step},\\
&\mathbf{n}_{\hat{\mathbf{y}}}^{\mathcal{E}}(\theta, \tau, k)&&\text{: Newton step},\\
&g_{\mathcal{E}}(\mathbf{x}, \hat{\mathcal{P}})              &&\text{: log-posterior distribution},
\end{alignedat}
\end{equation}
and the CV function as
\begin{equation}\label{cv_function}
\ell_{\mathcal{CV}}(\mathbf{x}, \hat{\mathcal{P}})\text{: log-likelihood function}
\end{equation}
where the estimation (CV) functions are obtained by retaining only the terms corresponding to $\mathcal{E}_{\mathrm{R}}$ ($\mathcal{CV}_{\mathrm{R}}$) in the summations of $f_{\hat{\mathbf{y}}}(\theta, \tau)$, $\mathbf{g}_{\hat{\mathbf{y}}}(\theta, \tau, k)$, $\mathbf{n}_{\hat{\mathbf{y}}}(\theta, \tau, k)$, $g(\mathbf{x}, \hat{\mathcal{P}})$, and $\ell(\mathbf{x}, \hat{\mathcal{P}})$. Then, the NFCFGS-CV algorithm proceeds as follows, whose details are presented in Algorithm \ref{nfcfgs_cv}. First, NFCFGS performs channel estimation in Lines 5, 8, 10, 13, and 14 as discussed in the previous subsections, but uses \eqref{estimation_function} instead of $f_{\hat{\mathbf{y}}}(\theta, \tau)$, $\mathbf{g}_{\hat{\mathbf{y}}}(\theta, \tau, k)$, $\mathbf{n}_{\hat{\mathbf{y}}}(\theta, \tau, k)$, and $g(\mathbf{x}, \hat{\mathcal{P}})$. Then, Line 15 checks the CV-based termination condition using \eqref{cv_function}, which terminates NFCFGS when the estimation quality starts to decrease due to overfitting.

To demonstrate how NFCFGS-CV proceeds, consider a toy example where Algorithm \ref{nfcfgs_cv} terminates in three iterations with $\hat{\mathcal{P}}$ updated in Line 13 as
\begin{align}
           &\{(\hat{\theta}_{2, 1}, \hat{\tau}_{2, 1}, 2)\}\notag\\
\rightarrow&\{(\hat{\theta}_{2, 1}, \hat{\tau}_{2, 1}, 2), (\hat{\theta}_{1, 1}, \hat{\tau}_{1, 1}, 1)\}\notag\\
\rightarrow&\{(\hat{\theta}_{2, 1}, \hat{\tau}_{2, 1}, 2), (\hat{\theta}_{1, 1}, \hat{\tau}_{1, 1}, 1), (\hat{\theta}_{2, 2}, \hat{\tau}_{2, 2}, 2)\},\notag
\end{align}
and the quality of $\hat{\mathcal{P}}$ is assessed in Line 15. In this example, the CV-based termination condition is satisfied in the third iteration, which implies that running Algorithm \ref{nfcfgs_cv} beyond three iterations leads to overfitting. Therefore, NFCFGS-CV is terminated, and the estimated number of paths is $\hat{L}_{1}=1$ and $\hat{L}_{2}=2$. Again, note that no prior knowledge on $L_{1}$ and $L_{2}$ was assumed, and the estimated number of paths is obtained from CV.

To show how CV detects overfitting, which comes from the disjoint nature of the estimation and CV data, consider the SE, which is defined as
\begin{equation}
\mathrm{SE}=\|\hat{\mathbf{h}}-\mathbf{h}\|^{2}
\end{equation}
where $\hat{\mathbf{h}}=\mathbf{F}(\hat{\mathcal{P}})\hat{\bm{\alpha}}$. An illustration of how $\ell_{\mathcal{CV}}(\hat{\bm{\alpha}}, \hat{\mathcal{P}})$ and the SE evolves with the iteration of NFCFGS-CV at $\mathrm{SNR}=0$ dB with $4$-bit ADCs is shown in Fig. \ref{figure_4}. To emphasize the disjoint nature of the estimation and CV data, $\ell_{\mathcal{E}}(\hat{\bm{\alpha}}, \hat{\mathcal{P}})$ is provided as a reference, which is defined as in \eqref{cv_function}. In Fig. \ref{figure_4}, the decreasing point of $\ell_{\mathcal{CV}}(\hat{\bm{\alpha}}, \hat{\mathcal{P}})$ agrees with the minimum SE, which clearly indicates overfitting. In contrast, however, $\ell_{\mathcal{E}}(\hat{\bm{\alpha}}, \hat{\mathcal{P}})$ increases monotonically.

\begin{figure}[t]
\centering
\includegraphics[width=1\columnwidth]{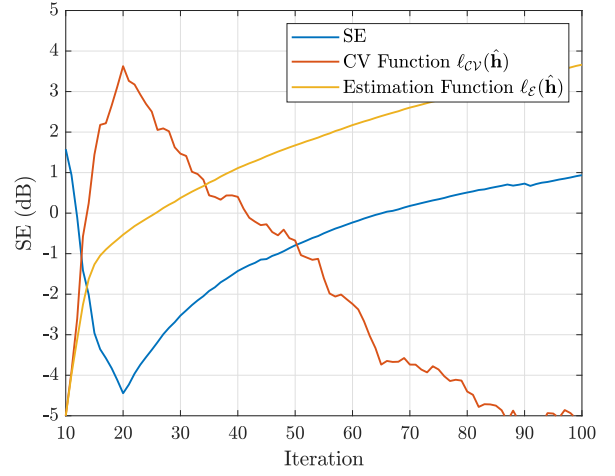}
\caption[caption]{SE in dB scale versus iteration of NFCFGS-CV with vertically scaled and shifted $\ell_{\mathcal{CV}}(\hat{\bm{\alpha}}, \hat{\mathcal{P}})$ and $\ell_{\mathcal{E}}(\hat{\bm{\alpha}}, \hat{\mathcal{P}})$ in linear scale for $B=4$ and $\mathrm{SNR}=0$ dB.}\label{figure_4}
\end{figure}

To investigate how CV assesses the estimation quality, the relationship between $\ell_{\mathcal{CV}}(\hat{\bm{\alpha}}, \hat{\mathcal{P}})$ and the SE is analyzed based on the CS perspective, which interprets $\bar{\mathbf{A}}$ in \eqref{unquantized_received_signal} as a sensing matrix with deterministic $\mathbf{h}$ and $\hat{\mathbf{h}}$. To facilitate the analysis, assume that $\bar{\mathbf{A}}$ has i.i.d. elements with zero mean and unit variance, which is a common assumption in the CS literature \cite{eldar2012compressed, 1614066}.\footnote{This assumption makes $\bar{\mathbf{A}}$ a favorable sensing matrix because rectangular sensing matrices with i.i.d. elements demonstrate well-conditionedness as the matrix size increases \cite{bai1993limit, eldar2012compressed}. In practice, however, how to design the training signals and RF combiners to make $\bar{\mathbf{A}}$ well-conditioned remains unclear, and is left as future work.} Then, an alternative expression of $\ell_{\mathcal{CV}}(\hat{\bm{\alpha}}, \hat{\mathcal{P}})$ is given as
\begin{align}
\ell_{\mathcal{CV}}(\hat{\mathbf{h}})=&\ell_{\mathcal{CV}}(\hat{\bm{\alpha}}, \hat{\mathcal{P}})\notag\\
                                     =&\sum_{i\in\mathcal{CV}_{\mathrm{R}}}\log\left(\Phi\left(\frac{\hat{y}_{\mathrm{R}, i}^{\mathrm{up}}-\bar{\mathbf{a}}_{\mathrm{R}, i}^{\mathrm{T}}\hat{\mathbf{h}}_{\mathrm{R}}}{\sqrt{1/2}}\right)\right.\notag\\
                                      &\left.-\Phi\left(\frac{\hat{y}_{\mathrm{R}, i}^{\mathrm{lo}}-\bar{\mathbf{a}}_{\mathrm{R}, i}^{\mathrm{T}}\hat{\mathbf{h}}_{\mathrm{R}}}{\sqrt{1/2}}\right)\right)
\end{align}
to emphasize the dependence on $\hat{\mathbf{h}}$. The following lemma shows how $\ell_{\mathcal{CV}}(\hat{\mathbf{h}})$ behaves in the asymptotic regime of $|\mathcal{CV}|$.
\begin{lemma}\label{lemma}
Assume that the elements of $\bar{\mathbf{A}}$ are i.i.d. with zero mean and unit variance. Define the limit of $1/2|\mathcal{CV}|\cdot\ell_{\mathcal{CV}}(\hat{\mathbf{h}})$ as $f_{\mathcal{CV}}(\hat{\mathbf{h}})$. In particular, $1/2|\mathcal{CV}|\cdot\ell_{\mathcal{CV}}(\hat{\mathbf{h}})\to f_{\mathcal{CV}}(\hat{\mathbf{h}})$ as $|\mathcal{CV}|\to\infty$ in probability. Then, $f_{\mathcal{CV}}(\hat{\mathbf{h}})$ is a concave function of $\hat{\mathbf{h}}$, whose maximum is at $\hat{\mathbf{h}}=\mathbf{h}$.
\end{lemma}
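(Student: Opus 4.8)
The plan is to read $\tfrac{1}{2|\mathcal{CV}|}\ell_{\mathcal{CV}}(\hat{\mathbf{h}})$ as an empirical average of the independent and identically distributed summands indexed by $i\in\mathcal{CV}_{\mathrm{R}}$, to identify its probabilistic limit with an expected log-likelihood, and then to read off both the concavity and the location of the maximum from the structure of that expectation. Since the rows of $\bar{\mathbf{A}}$ are i.i.d. and the noise is i.i.d., the terms of $\ell_{\mathcal{CV}}(\hat{\mathbf{h}})$ are i.i.d., so the weak law of large numbers yields $\tfrac{1}{2|\mathcal{CV}|}\ell_{\mathcal{CV}}(\hat{\mathbf{h}})\xrightarrow{\mathrm{p}}f_{\mathcal{CV}}(\hat{\mathbf{h}})$ with
\[
f_{\mathcal{CV}}(\hat{\mathbf{h}})=\mathbb{E}\!\left[\log\!\left(\Phi\!\left(\tfrac{q^{\mathrm{up}}-w}{\sqrt{1/2}}\right)-\Phi\!\left(\tfrac{q^{\mathrm{lo}}-w}{\sqrt{1/2}}\right)\right)\right],
\]
where $w=\bar{\mathbf{a}}_{\mathrm{R}}^{\mathrm{T}}\hat{\mathbf{h}}_{\mathrm{R}}$ for a generic row $\bar{\mathbf{a}}_{\mathrm{R}}$, the thresholds $q^{\mathrm{lo}}\le q^{\mathrm{up}}$ are those of the quantization bin into which the noisy observation $\bar{\mathbf{a}}_{\mathrm{R}}^{\mathrm{T}}\mathbf{h}_{\mathrm{R}}+v$ falls with $v\sim\mathcal{N}(0,1/2)$, and the expectation is over $\bar{\mathbf{a}}_{\mathrm{R}}$ and $v$.

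For concavity I would argue termwise. Each summand is the logarithm of a difference of Gaussian CDFs evaluated at an argument that is affine in $\hat{\mathbf{h}}_{\mathrm{R}}$ through $w$. As already observed below \eqref{p1}, a map of the form $\log(\Phi(b-g)-\Phi(a-g))$ with $b>a$ is concave in $g$ by log-concavity of the Gaussian law, while the extreme bins are covered by the log-concavity of $\Phi$ and $1-\Phi$. Composing a concave function with the affine map $\hat{\mathbf{h}}_{\mathrm{R}}\mapsto w$ preserves concavity, and taking the expectation of concave functions preserves it as well; hence $f_{\mathcal{CV}}$ is concave in $\hat{\mathbf{h}}$.

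For the location of the maximum I would condition on $\bar{\mathbf{a}}_{\mathrm{R}}$ and integrate out the noise first. Writing $z=\bar{\mathbf{a}}_{\mathrm{R}}^{\mathrm{T}}\mathbf{h}_{\mathrm{R}}$ and, for each bin $j$ with thresholds $q_j^{\mathrm{lo}}\le q_j^{\mathrm{up}}$, $p_j(u)=\Phi((q_j^{\mathrm{up}}-u)/\sqrt{1/2})-\Phi((q_j^{\mathrm{lo}}-u)/\sqrt{1/2})$, the observed bin equals $j$ with probability $p_j(z)$ and contributes $\log p_j(w)$, so the noise-averaged contribution is
\[
\sum_{j}p_j(z)\log p_j(w)=-H\!\left(p(z)\right)-D_{\mathrm{KL}}\!\left(p(z)\,\|\,p(w)\right).
\]
Averaging over $\bar{\mathbf{a}}_{\mathrm{R}}$, the entropy term is independent of $\hat{\mathbf{h}}$, while the Kullback--Leibler term is nonnegative and vanishes exactly when $p(w)=p(z)$, i.e.\ when $w=z$. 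Choosing $\hat{\mathbf{h}}=\mathbf{h}$ forces $w=z$ for every $\bar{\mathbf{a}}_{\mathrm{R}}$ simultaneously, so $f_{\mathcal{CV}}$ attains its maximum there; moreover, when $\bar{\mathbf{a}}_{\mathrm{R}}$ is nondegenerate enough that $w\neq z$ with positive probability whenever $\hat{\mathbf{h}}_{\mathrm{R}}\neq\mathbf{h}_{\mathrm{R}}$, this maximizer is unique.

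I expect the main obstacle to lie not in the concavity or the information-inequality step, which are essentially structural, but in making the law-of-large-numbers step rigorous: the summand $\log(\Phi(\cdot)-\Phi(\cdot))$ diverges to $-\infty$ when the candidate $\hat{\mathbf{h}}$ assigns vanishing probability to the observed bin, so I would need to verify integrability of the generic term, uniformly over a compact set of $\hat{\mathbf{h}}$, using Gaussian tail bounds by which the log grows at most quadratically while the probability of the offending events decays fast enough to compensate. The secondary technical point is the careful bookkeeping of the data-dependent, hence random, thresholds $q^{\mathrm{lo}}$ and $q^{\mathrm{up}}$ when setting up the expectation.
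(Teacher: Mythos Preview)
Your proposal is correct and follows essentially the same route as the paper: both apply the weak law of large numbers to identify $f_{\mathcal{CV}}(\hat{\mathbf{h}})$ as an expectation, establish concavity termwise via the log-concavity of $\Phi(b-\cdot)-\Phi(a-\cdot)$ composed with an affine map, and then recognize the inner (noise-averaged) sum as a negative cross entropy between the quantized-Gaussian PMFs at $\bar{\mathbf{a}}_{\mathrm{R}}^{\mathrm{T}}\mathbf{h}_{\mathrm{R}}$ and $\bar{\mathbf{a}}_{\mathrm{R}}^{\mathrm{T}}\hat{\mathbf{h}}_{\mathrm{R}}$, which is maximized when the two coincide. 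The only cosmetic difference is that the paper conditions first on $\hat{y}_{\mathrm{R},i}$ and then applies Bayes' rule to reach the cross-entropy form, whereas you condition on $\bar{\mathbf{a}}_{\mathrm{R}}$ and integrate out the noise directly; your additional remarks on integrability and uniqueness are sound refinements that the paper simply does not address.
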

\begin{proof}
The randomness of $\ell_{\mathcal{CV}}(\hat{\mathbf{h}})$ comes from $\hat{\mathbf{y}}$ and $\bar{\mathbf{A}}$. Since $\hat{\mathbf{y}}=\mathrm{Q}(\bar{\mathbf{A}}\mathbf{h}+\mathbf{v})$, the elements of $\hat{\mathbf{y}}$ are i.i.d. as are the elements of $\bar{\mathbf{A}}$. Then, the weak law of large numbers gives
\begin{align}\label{sample_mean}
&f_{\mathcal{CV}}(\hat{\mathbf{h}})\notag\\
&\overset{\hphantom{(a)}}{=}\mathbb{E}_{\hat{y}_{\mathrm{R}, i}, \bar{\mathbf{a}}_{\mathrm{R}, i}}\left\{\log\left(\Phi\left(\frac{\hat{y}_{\mathrm{R}, i}^{\mathrm{up}}-\bar{\mathbf{a}}_{\mathrm{R}, i}^{\mathrm{T}}\hat{\mathbf{h}}_{\mathrm{R}}}{\sqrt{1/2}}\right)\right.\right.\notag\\
&\mathrel{\hphantom{\overset{\hphantom{(a)}}{=}}}\left.\left.-\Phi\left(\frac{\hat{y}_{\mathrm{R}, i}^{\mathrm{lo}}-\bar{\mathbf{a}}_{\mathrm{R}, i}^{\mathrm{T}}\hat{\mathbf{h}}_{\mathrm{R}}}{\sqrt{1/2}}\right)\right)\right\}\notag\\
&\overset{(a)}{=}\sum_{\hat{y}_{\mathrm{R}, i}\in\mathcal{Q}}\mathbb{E}_{\bar{\mathbf{a}}_{\mathrm{R}, i}\mid\hat{y}_{\mathrm{R}, i}}\left\{\log\left(\Phi\left(\frac{\hat{y}_{\mathrm{R}, i}^{\mathrm{up}}-\bar{\mathbf{a}}_{\mathrm{R}, i}^{\mathrm{T}}\hat{\mathbf{h}}_{\mathrm{R}}}{\sqrt{1/2}}\right)\right.\right.\notag\\
&\mathrel{\hphantom{\overset{\hphantom{(a)}}{=}}}\left.\left.-\Phi\left(\frac{\hat{y}_{\mathrm{R}, i}^{\mathrm{lo}}-\bar{\mathbf{a}}_{\mathrm{R}, i}^{\mathrm{T}}\hat{\mathbf{h}}_{\mathrm{R}}}{\sqrt{1/2}}\right)\right)\right\}\mathrm{Pr}\left[\hat{y}_{\mathrm{R}, i}\right]\notag\\
&\overset{(b)}{=}\sum_{\hat{y}_{\mathrm{R}, i}\in\mathcal{Q}}\mathbb{E}_{\bar{\mathbf{a}}_{\mathrm{R}, i}}\left\{\log\left(\Phi\left(\frac{\hat{y}_{\mathrm{R}, i}^{\mathrm{up}}-\bar{\mathbf{a}}_{\mathrm{R}, i}^{\mathrm{T}}\hat{\mathbf{h}}_{\mathrm{R}}}{\sqrt{1/2}}\right)\right.\right.\notag\\
&\mathrel{\hphantom{\overset{\hphantom{(b)}}{=}}}\left.\left.-\Phi\left(\frac{\hat{y}_{\mathrm{R}, i}^{\mathrm{lo}}-\bar{\mathbf{a}}_{\mathrm{R}, i}^{\mathrm{T}}\hat{\mathbf{h}}_{\mathrm{R}}}{\sqrt{1/2}}\right)\right)\right.\left(\Phi\left(\frac{\hat{y}_{\mathrm{R}, i}^{\mathrm{up}}-\bar{\mathbf{a}}_{\mathrm{R}, i}^{\mathrm{T}}\mathbf{h}_{\mathrm{R}}}{\sqrt{1/2}}\right)\right.\notag\\
&\mathrel{\hphantom{\overset{\hphantom{(b)}}{=}}}\left.\left.-\Phi\left(\frac{\hat{y}_{\mathrm{R}, i}^{\mathrm{lo}}-\bar{\mathbf{a}}_{\mathrm{R}, i}^{\mathrm{T}}\mathbf{h}_{\mathrm{R}}}{\sqrt{1/2}}\right)\right)\right\}\notag\\
&\overset{\hphantom{(b)}}{=}\mathbb{E}_{\bar{\mathbf{a}}_{\mathrm{R}, i}}\left\{\sum_{\hat{y}_{\mathrm{R}, i}\in\mathcal{Q}}\log\left(\Phi\left(\frac{\hat{y}_{\mathrm{R}, i}^{\mathrm{up}}-\bar{\mathbf{a}}_{\mathrm{R}, i}^{\mathrm{T}}\hat{\mathbf{h}}_{\mathrm{R}}}{\sqrt{1/2}}\right)\right.\right.\notag\\
&\mathrel{\hphantom{\overset{\hphantom{(b)}}{=}}}\left.\left.-\Phi\left(\frac{\hat{y}_{\mathrm{R}, i}^{\mathrm{lo}}-\bar{\mathbf{a}}_{\mathrm{R}, i}^{\mathrm{T}}\hat{\mathbf{h}}_{\mathrm{R}}}{\sqrt{1/2}}\right)\right)\right.\left(\Phi\left(\frac{\hat{y}_{\mathrm{R}, i}^{\mathrm{up}}-\bar{\mathbf{a}}_{\mathrm{R}, i}^{\mathrm{T}}\mathbf{h}_{\mathrm{R}}}{\sqrt{1/2}}\right)\right.\notag\\
&\mathrel{\hphantom{\overset{\hphantom{(b)}}{=}}}\left.\left.-\Phi\left(\frac{\hat{y}_{\mathrm{R}, i}^{\mathrm{lo}}-\bar{\mathbf{a}}_{\mathrm{R}, i}^{\mathrm{T}}\mathbf{h}_{\mathrm{R}}}{\sqrt{1/2}}\right)\right)\right\}
\end{align}
where (a) is the result of conditioning on $\hat{y}_{\mathrm{R}, i}$, (b) comes from the fact that
\begin{align}
&\mathsf{p}_{\bar{\mathbf{a}}_{\mathrm{R}, i}\mid\hat{y}_{\mathrm{R}, i}}(\bar{\mathbf{a}}_{\mathrm{R}, i})\mathrm{Pr}\left[\hat{y}_{\mathrm{R}, i}\right]\notag\\
&\overset{\hphantom{(c)}}{=}\mathrm{Pr}\left[\hat{y}_{\mathrm{R}, i}\mid\bar{\mathbf{a}}_{\mathrm{R}, i}\right]\mathsf{p}_{\bar{\mathbf{a}}_{\mathrm{R}, i}}(\bar{\mathbf{a}}_{\mathrm{R}, i})\notag\\
&\overset{(c)}{=}\left(\Phi\left(\frac{\hat{y}_{\mathrm{R}, i}^{\mathrm{up}}-\bar{\mathbf{a}}_{\mathrm{R}, i}^{\mathrm{T}}\mathbf{h}_{\mathrm{R}}}{\sqrt{1/2}}\right)-\Phi\left(\frac{\hat{y}_{\mathrm{R}, i}^{\mathrm{lo}}-\bar{\mathbf{a}}_{\mathrm{R}, i}^{\mathrm{T}}\mathbf{h}_{\mathrm{R}}}{\sqrt{1/2}}\right)\right)\times\notag\\
&\mathrel{\hphantom{\overset{\hphantom{(c)}}{=}}}\mathsf{p}_{\bar{\mathbf{a}}_{\mathrm{R}, i}}(\bar{\mathbf{a}}_{\mathrm{R}, i})
\end{align}
where $\mathsf{p}(\cdot)$ is the probability measure on $\bar{\mathbf{a}}_{\mathrm{R}, i}$, and (c) follows from the same logic in \eqref{single_path_log_likelihood_function}. Since expectation is a nonnegative weighted sum, many properties inside the expectation remain unchanged. Therefore, we focus on the summation inside the expectation in the last line of \eqref{sample_mean}. First, note that the summation inside the expectation is concave with respect to $\hat{\mathbf{h}}$ because each term has the form of $\log(\Phi(b-\mathbf{c}^{\mathrm{T}}\hat{\mathbf{h}})-\Phi(a-\mathbf{c}^{\mathrm{T}}\hat{\mathbf{h}}))$ scaled by a nonnegative number with $b>a$, which is concave \cite{boyd2004convex}. Therefore, $f_{\mathcal{CV}}(\hat{\mathbf{h}})$ is concave with respect to $\hat{\mathbf{h}}$. In addition, note that the summation inside the expectation is the negative cross entropy of the probability mass functions (PMFs), which are obtained by quantizing $\mathcal{N}(\bar{\mathbf{a}}_{\mathrm{R}, i}^{\mathrm{T}}\mathbf{h}_{\mathrm{R}}, 1/2)$ and $\mathcal{N}(\bar{\mathbf{a}}_{\mathrm{R}, i}^{\mathrm{T}}\hat{\mathbf{h}}_{\mathrm{R}}, 1/2)$ using $\mathrm{Q}(\cdot)$. Since the maximum of the negative cross entropy occurs when the PMFs are identical, the maximum of $f_{\mathcal{CV}}(\hat{\mathbf{h}})$ is attained at $\hat{\mathbf{h}}=\mathbf{h}$, which completes the proof.
\end{proof}
Lemma \ref{lemma} says that $\ell_{\mathcal{CV}}(\hat{\mathbf{h}})$ is a concave function of $\hat{\mathbf{h}}$, whose maximum is attained at $\hat{\mathbf{h}}=\mathbf{h}$, provided that the number of the CV data is sufficiently large. Therefore, $\ell_{\mathcal{CV}}(\hat{\mathbf{h}})$ partially reflects the SE. The overall SE, however, is clouded by the distortion that is suspected to be incurred by the nonlinearity of low-resolution ADCs. To analyze how the distortion decreases with the ADC resolution, consider $\infty$-bit ADCs with uniform quantization intervals of width $\Delta$. Then, the following corollary shows how the distortion is related to $\Delta$, whose reciprocal is interpreted as the ADC resolution.
\begin{corollary}\label{corollary}
Assume that $\infty$-bit ADCs with uniform quantization intervals of width $\Delta$ are used. Then, $f_{\mathcal{CV}}(\hat{\mathbf{h}})$ is expressed as
\begin{equation}
f_{\mathcal{CV}}(\hat{\mathbf{h}})=-\|\hat{\mathbf{h}}-\mathbf{h}\|^{2}-\frac{1}{2}\log\pi e+\log\Delta+\mathcal{O}(\Delta).
\end{equation}
\end{corollary}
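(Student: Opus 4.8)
The plan is to start from the closed-form expression for $f_{\mathcal{CV}}(\hat{\mathbf{h}})$ already obtained in the proof of Lemma \ref{lemma}. There, the summation inside the expectation was identified as the negative cross entropy between the two PMFs produced by passing the Gaussians $\mathcal{N}(\bar{\mathbf{a}}_{\mathrm{R}, i}^{\mathrm{T}}\mathbf{h}_{\mathrm{R}}, 1/2)$ (the ``true'' density $p$) and $\mathcal{N}(\bar{\mathbf{a}}_{\mathrm{R}, i}^{\mathrm{T}}\hat{\mathbf{h}}_{\mathrm{R}}, 1/2)$ (the ``model'' density $q$) through the uniform quantizer $\mathrm{Q}(\cdot)$ of width $\Delta$. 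Thus $f_{\mathcal{CV}}(\hat{\mathbf{h}})=-\mathbb{E}_{\bar{\mathbf{a}}_{\mathrm{R}, i}}\{H(p_{\Delta}, q_{\Delta})\}$, where $H(p_{\Delta}, q_{\Delta})$ denotes the discrete cross entropy of the two quantized distributions, and the whole corollary reduces to expanding this quantity as $\Delta\to 0$.

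For the main step I would invoke the standard quantization expansion that links a discrete cross entropy to its differential counterpart. Writing each bin probability as an integral of $p$ or $q$ over the bin and approximating it by the density at a bin point times $\Delta$, a Riemann-sum argument combined with a first-order Taylor expansion of $\log q$ across each bin gives $H(p_{\Delta}, q_{\Delta})=h(p, q)-\log\Delta+\mathcal{O}(\Delta)$, where $h(p, q)=-\int p\log q$ is the differential cross entropy; the constant shift $-\log\Delta$ comes from the $\log\Delta$ in each bin probability together with $\sum_{\mathrm{bins}}\int_{\mathrm{bin}}p=1$. I would then evaluate $h(p, q)$ in closed form for two Gaussians of common variance $\sigma^{2}=1/2$ and means $\mu=\bar{\mathbf{a}}_{\mathrm{R}, i}^{\mathrm{T}}\mathbf{h}_{\mathrm{R}}$ and $\hat{\mu}=\bar{\mathbf{a}}_{\mathrm{R}, i}^{\mathrm{T}}\hat{\mathbf{h}}_{\mathrm{R}}$, obtaining $h(p, q)=\tfrac{1}{2}\log(2\pi\sigma^{2})+\tfrac{1}{2}+\tfrac{(\mu-\hat{\mu})^{2}}{2\sigma^{2}}=\tfrac{1}{2}\log\pi e+(\mu-\hat{\mu})^{2}$.

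The final step is to average over $\bar{\mathbf{a}}_{\mathrm{R}, i}$. Only the quadratic term depends on it, and using the zero-mean, unit-variance property inherited by $\bar{\mathbf{a}}_{\mathrm{R}, i}$ from $\bar{\mathbf{A}}$, one has $\mathbb{E}_{\bar{\mathbf{a}}_{\mathrm{R}, i}}\{(\mu-\hat{\mu})^{2}\}=\mathbb{E}_{\bar{\mathbf{a}}_{\mathrm{R}, i}}\{(\bar{\mathbf{a}}_{\mathrm{R}, i}^{\mathrm{T}}(\mathbf{h}_{\mathrm{R}}-\hat{\mathbf{h}}_{\mathrm{R}}))^{2}\}=\|\mathbf{h}_{\mathrm{R}}-\hat{\mathbf{h}}_{\mathrm{R}}\|^{2}=\|\hat{\mathbf{h}}-\mathbf{h}\|^{2}$. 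Collecting the pieces with the outer sign flip yields the claimed $f_{\mathcal{CV}}(\hat{\mathbf{h}})=-\|\hat{\mathbf{h}}-\mathbf{h}\|^{2}-\tfrac{1}{2}\log\pi e+\log\Delta+\mathcal{O}(\Delta)$. The part I expect to require the most care is the $\Delta\to 0$ expansion: I need the $\mathcal{O}(\Delta)$ remainder to survive the sum over infinitely many bins, so the Gaussian tails must be controlled (e.g. by bounding the Taylor remainder through the bounded derivatives of the Gaussian log-density and invoking finiteness of the Gaussian moments), and I must justify interchanging the $\Delta\to 0$ limit with the expectation over $\bar{\mathbf{a}}_{\mathrm{R}, i}$, for which a dominated-convergence-type bound is needed.
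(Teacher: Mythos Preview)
Your proposal is correct and follows essentially the same route as the paper: start from the negative cross-entropy identification in Lemma~\ref{lemma}, invoke the $H(p_{\Delta},q_{\Delta})=h(p,q)-\log\Delta+\mathcal{O}(\Delta)$ expansion (the paper cites this as a textbook fact rather than sketching the Riemann-sum argument), compute the differential cross entropy of the two Gaussians in closed form, and finish by averaging the quadratic term using the i.i.d.\ zero-mean unit-variance structure of $\bar{\mathbf{a}}_{\mathrm{R},i}$. The only difference is cosmetic: the paper is terser and does not flag the tail-control and interchange-of-limit issues you raise, whereas you make these explicit; but the skeleton of the argument is identical.
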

\begin{proof}
Consider the discrete random variable $X^{\Delta}$, which is obtained by quantizing the continuous random variable $X$ with $\infty$-bit ADCs with uniform quantization intervals of width $\Delta$. Then, the entropy $H(X^{\Delta})$ is related to the differential entropy $h(X)$ as \cite{cover2012elements}
\begin{equation}\label{relationship}
H(X^{\Delta})+\log\Delta=h(X)+\mathcal{O}(\Delta).
\end{equation}
Since the summation inside the expectation in the last line of \eqref{sample_mean} is the negative cross entropy that is obtained by quantizing $\mathcal{N}(\bar{\mathbf{a}}_{\mathrm{R}, i}^{\mathrm{T}}\mathbf{h}_{\mathrm{R}}, 1/2)$ and $\mathcal{N}(\bar{\mathbf{a}}_{\mathrm{R}, i}^{\mathrm{T}}\hat{\mathbf{h}}_{\mathrm{R}}, 1/2)$, whose differential cross entropy is \cite{cover2012elements}
\begin{equation}\label{differental_cross_entropy}
\left(\bar{\mathbf{a}}_{\mathrm{R}, i}^{\mathrm{T}}\left(\hat{\mathbf{h}}_{\mathrm{R}}-\mathbf{h}_{\mathrm{R}}\right)\right)^{2}+\frac{1}{2}\log\pi e,
\end{equation}
applying \eqref{relationship} and \eqref{differental_cross_entropy} to the summation inside the expectation in the last line of \eqref{sample_mean} gives
\begin{align}
&f_{\mathcal{CV}}(\hat{\mathbf{h}})-\log\Delta\notag\\
&\overset{\hphantom{(a)}}{=}\mathbb{E}_{\bar{\mathbf{a}}_{\mathrm{R}, i}}\left\{-\left(\bar{\mathbf{a}}_{\mathrm{R}, i}^{\mathrm{T}}\left(\hat{\mathbf{h}}_{\mathrm{R}}-\mathbf{h}_{\mathrm{R}}\right)\right)^{2}-\frac{1}{2}\log\pi e+\mathcal{O}(\Delta)\right\}\notag\\
&\overset{(a)}{=}-\|\hat{\mathbf{h}}_{\mathrm{R}}-\mathbf{h}_{\mathrm{R}}\|^{2}-\frac{1}{2}\log\pi e+\mathcal{O}(\Delta)
\end{align}
where (a) is due to the fact that the elements of $\bar{\mathbf{a}}_{\mathrm{R}, i}$ are i.i.d. with zero mean and unit variance.
\end{proof}
Corollary \ref{corollary} says that $\ell_{\mathcal{CV}}(\hat{\mathbf{h}})$ is partially capable of assessing the SE when the number of the CV data is sufficiently large, and the accuracy increases with the ADC resolution. In particular, the terms that are not proportional to the SE decreases as $\mathcal{O}(\Delta)$. To empirically verify Corollary \ref{corollary} for the general case where low-resolution ADCs are employed, the contour plot of $f_{\mathcal{CV}}(\hat{\mathbf{h}})$ is illustrated in Fig. \ref{figure_5} for various $B$, while the elements of $\bar{\mathbf{A}}$ are i.i.d. as $\mathcal{CN}(0, 1)$, and the expectation in $f_{\mathcal{CV}}(\hat{\mathbf{h}})$ is numerically evaluated. According to Fig. \ref{figure_5}, $f_{\mathcal{CV}}(\hat{\mathbf{h}})$ becomes more proportional to the SE as $B$ increases, which agrees with Corollary \ref{corollary}.

\begin{figure}[t]
\centering
\includegraphics[width=1\columnwidth]{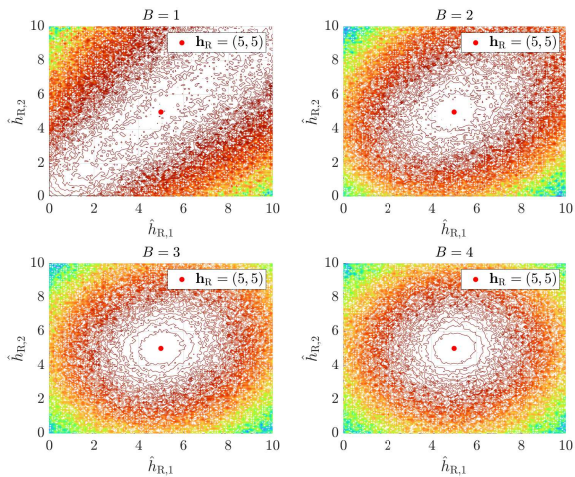}
\caption[caption]{Contour plot of $f_{\mathcal{CV}}(\hat{\mathbf{h}})$ for $B=1, 2, 3, 4$ with $\mathbf{h}_{\mathrm{R}}=(5, 5)$. The elements of $\bar{\mathbf{A}}$ are i.i.d. as $\mathcal{CN}(0, 1)$.}\label{figure_5}
\end{figure}

\section{Simulation Results}\label{section_5}
In this section, the performance of the proposed NFCFGS-CV-based channel estimator is evaluated based on the normalized mean SE (NMSE), which is defined as
\begin{equation}
\mathrm{NMSE}=\mathbb{E}\left\{\frac{\|\hat{\mathbf{h}}-\mathbf{h}\|^{2}}{\|\mathbf{h}\|^{2}}\right\},
\end{equation}
along with the complexity analysis. The base station employs a ULA with half-wavelength antenna spacings of $\mathsf{d}=\lambda_{c}/2$, while $M$ and $R$ vary from simulation to simulation. The columns of the RF combiner are configured as circularly shifted ZC sequences of length $M$. A pair of $B$-bit ADCs with uniform quantization intervals are deployed at each RF chain, whose quantization intervals vary with the SNR. The training signals are selected as circularly shifted ZC sequences of length $N_{\mathrm{f}}$ with the cyclic prefix and suffix as discussed in Remark 4, while $N_{\mathrm{f}}$, $N_{\mathrm{p}}$, and $N_{\mathrm{s}}$ satisfy the requirements in Remarks 2 and 3. The SNRs of all the users are assumed to be the same unless otherwise stated.\footnote{In practical multiuser systems, the users whose received powers are below a certain threshold are excluded to avoid those users from being clouded by other users during the sampling process, which is commonly achieved by properly scheduling the users along with power control.} The raised-cosine (RC) pulse shaping filter is selected with a roll-off factor of $0.35$. The system operates in the carrier frequency of $f_{c}=28$ GHz and bandwidth $W=600$ MHz, while the channel parameters $D$ and $L_{k}$ vary from simulation to simulation. Throughout the simulations, the spatial wideband effect is incorporated by generating the channel according to \eqref{spatial_wideband_effect}, whose path gains, AoAs, and delays are assumed to be i.i.d. as \eqref{path_gain_aoa_delay} (except in Fig. \ref{figure_6} where the AoA is deterministically generated). The grid resolutions of $\Omega$ for on-grid AoA-delay estimation are set as $(R_{\mathrm{a}}, R_{\mathrm{d}})=(2, 2)$. Among $N_{\mathrm{t}}$ frames, $0.8N_{\mathrm{t}}$ frames are used for estimation, while $0.2N_{\mathrm{t}}$ frames are devoted to CV for NFCFGS-CV. In contrast, other benchmarks utilize all $N_{\mathrm{t}}$ frames for estimation.

To illustrate the impact of the spatial wideband effect on the channel estimation performance, the NMSEs of the two channel estimators are compared, one that accounts for the propagation delay across the antenna array as proposed and one that neglects the spatial wideband effect. The latter performs channel estimation based on the conventional channel model that assumes $p(dT_{s}-\tau_{k, \ell, m})\approx p(dT_{s}-\tau_{k, \ell})$ in \eqref{spatial_wideband_effect}, so the performance degradation due to the model mismatch is inevitable. Since the propagation delay across the antenna array becomes severe as the AoA deviates from the boresight, namely $\theta=0$, the performance degradation is expected to increase with the AoA deviation. To measure the performance degradation due to the model mismatch, define the NMSE degradation as
\begin{equation}
\mathrm{NMSE}\text{ }\mathrm{degradation}=\frac{\mathbb{E}\{\|\check{\mathbf{h}}-\mathbf{h}\|^{2}/\|\mathbf{h}\|^{2}\}}{\mathbb{E}\{\|\hat{\mathbf{h}}-\mathbf{h}\|^{2}/\|\mathbf{h}\|^{2}\}}
\end{equation}
where $\check{\mathbf{h}}$ is the channel estimate that does not account for the spatial wideband effect, while $\hat{\mathbf{h}}$ does.

In Fig. \ref{figure_6}, the NMSE degradation of NFCFGS-CV is shown in the single-user single-path case for various AoAs with $B=\infty$, $M=256$, $R=16$, $K=1$, $D=2$, $L=1$, $N_{\mathrm{t}}=80$, and $N_{\mathrm{f}}=10$. According to Fig. \ref{figure_6}, the NMSE degradation due to the model mismatch increases as the AoA deviation from the boresight increases, which suggests that the spatial wideband effect cannot be neglected in wideband mmWave massive MIMO systems. Therefore, we conclude that properly modeling the propagation delay across the antenna array has a significant impact on the channel estimation performance.

\begin{figure}[t]
\centering
\includegraphics[width=1\columnwidth]{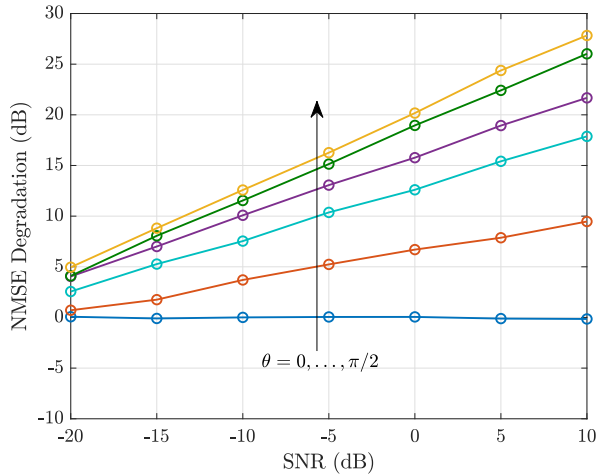}
\caption[caption]{NMSE degradation due to the model mismatch for $\theta=0, \dots, \pi/2$ versus SNR for $B=\infty$ with $M=256$, $R=16$, $K=1$, $D=2$, $L=1$, $N_{\mathrm{t}}=80$, and $N_{\mathrm{f}}=10$.}\label{figure_6}
\end{figure}

To evaluate the performance of NFCFGS-CV, the GAMP \cite{8171203}, VAMP \cite{8171203}, GEC-SR \cite{8320852}, and VB-SBL \cite{8310593} algorithms are considered as benchmarks, which are state-of-the-art on-grid CS algorithms for generalized linear models. The grid resolutions of $\Omega$ for the construction of sensing matrices are selected as $(R_{\mathrm{a}}, R_{\mathrm{d}})=(2, 2)$ for VB-SBL. GAMP, VAMP, and GEC-SR, however, are constrained to $(R_{\mathrm{a}}, R_{\mathrm{d}})=(1, 1)$ because these algorithms are sensitive to ill-conditioned sensing matrices. GAMP, VAMP, and GEC-SR use the Gaussian mixture model to approximate the prior distribution of $\mathbf{h}$, whereas VB-SBL adopts the Student-t distribution \cite{8310593}. In addition, the performance of the FCFGS-CV algorithm without Newton's method \cite{8789658} is shown as a reference to demonstrate how the gridless approach improves the performance.

\begin{figure}[t]
\centering
\includegraphics[width=1\columnwidth]{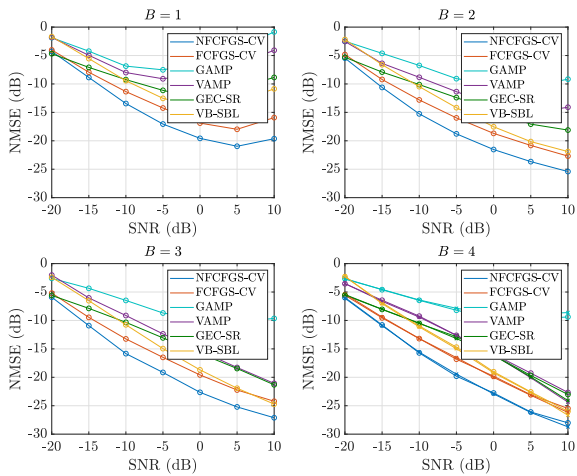}
\caption[caption]{NMSE versus SNR for $B=1, 2, 3, 4$ with $M=32$, $R=8$, $K=4$, $D=4$, $(L_{1}, L_{2}, L_{3}, L_{4})=(4, 2, 4, 2)$, $N_{\mathrm{t}}=40$, and $N_{\mathrm{f}}=40$. In the last subplot, $B=\infty$ is shown as a reference with a cross.}\label{figure_7}
\end{figure}

\begin{figure}[t]
\centering
\includegraphics[width=1\columnwidth]{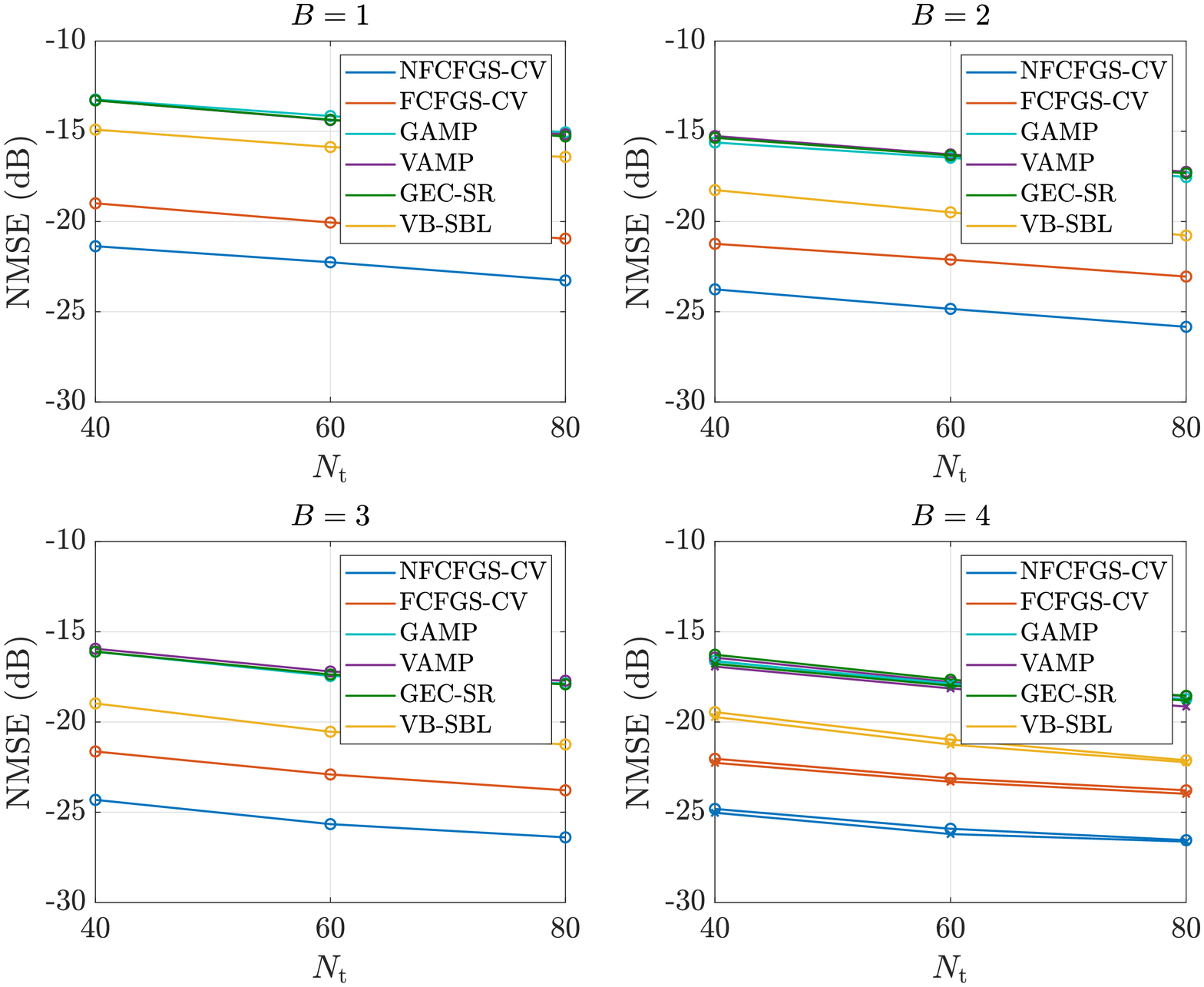}
\caption[caption]{NMSE versus $N_{\mathrm{t}}$ for $B=1, 2, 3, 4$ with $M=32$, $R=8$, $K=4$, $D=4$, $L_{k}=2$ for all $k$, $N_{\mathrm{f}}=40$, and $\mathrm{SNR}=0$ dB. In the last subplot, $B=\infty$ is shown as a reference with a cross.}\label{figure_8}
\end{figure}

\begin{figure}[t]
\centering
\includegraphics[width=1\columnwidth]{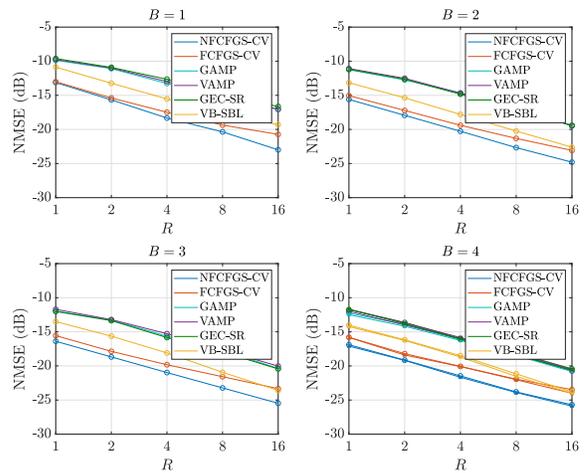}
\caption[caption]{NMSE versus $R$ for $B=1, 2, 3, 4$ with $M=16$, $K=2$, $D=4$, $L_{k}=2$ for all $k$, $N=1600$, and $\mathrm{SNR}=0$ dB. In the last subplot, $B=\infty$ is shown as a reference with a cross.}\label{figure_9}
\end{figure}

\begin{figure}[t]
\centering
\includegraphics[width=1\columnwidth]{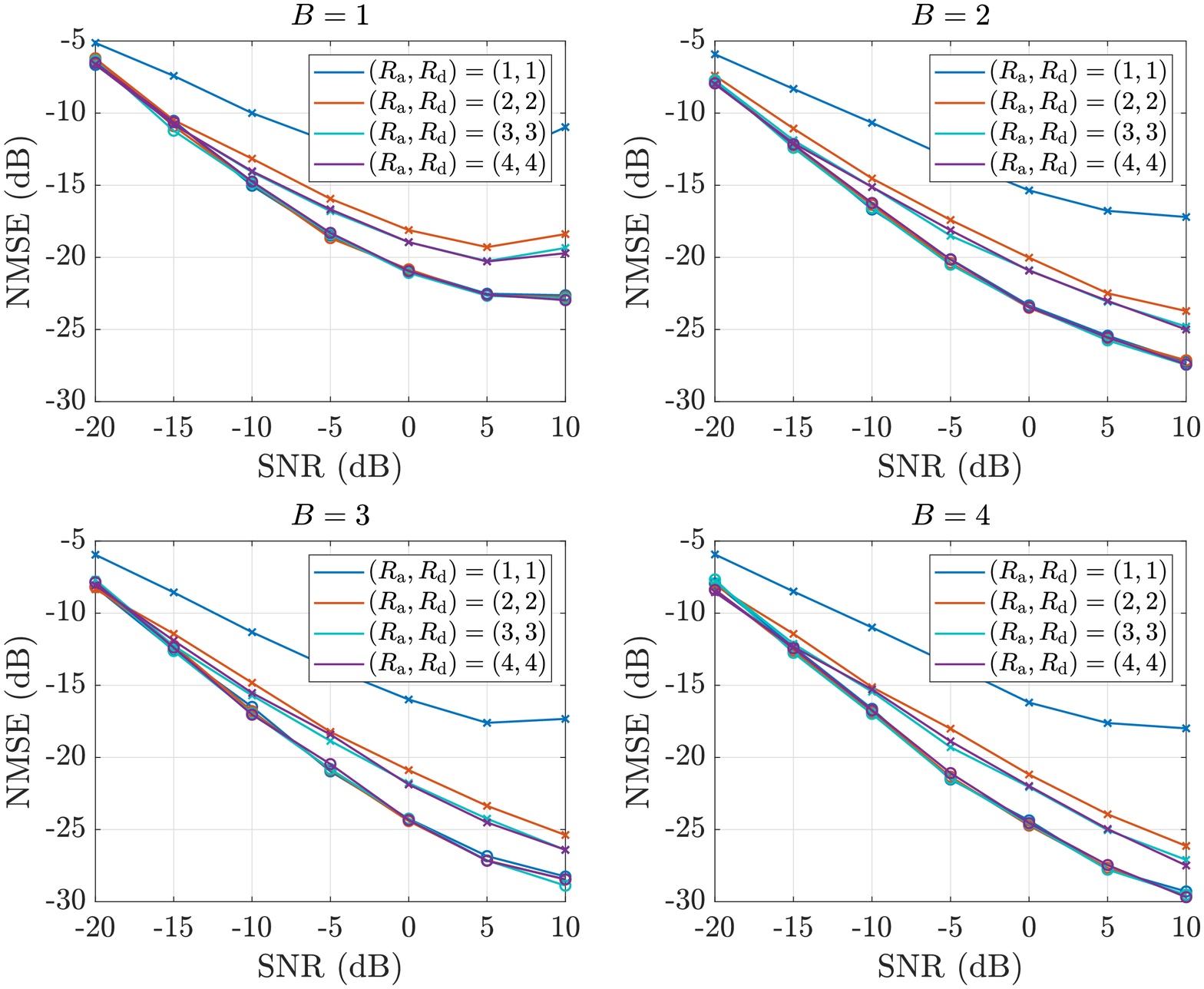}
\caption[caption]{NMSE for $(R_{\mathrm{a}}, R_{\mathrm{d}})=(1, 1), (2, 2), (3, 3), (4, 4)$ versus SNR for $B=1, 2, 3, 4$ with $M=32$, $R=8$, $K=4$, $L_{k}=2$ for all $k$, $N_{\mathrm{t}}=40$, and $N_{\mathrm{f}}=40$. In each subplot, circles and crosses represent NFCFGS-CV and FCFGS-CV.}\label{figure_10}
\end{figure}

\begin{table}[t]
\centering
\caption{Per-iteration complexities of various channel estimators}\label{table_1}
\renewcommand{\arraystretch}{1.2}
\begin{tabular}{c||c}
\hline
Algorithm&Complexity\\
\hline\hline
NFCFGS-CV&$\mathcal{O}(MRN|\mathcal{D}|KR_{\mathrm{a}}R_{\mathrm{d}}+RNkt_{\mathrm{G}}+RNt_{\mathrm{N}})$\\
\hline
FCFGS-CV&$\mathcal{O}(MRN|\mathcal{D}|KR_{\mathrm{a}}R_{\mathrm{d}}+RNkt_{\mathrm{G}})$ \cite{8789658}\\
\hline
GAMP&$\mathcal{O}(MRN|\mathcal{D}|KR_{\mathrm{a}}R_{\mathrm{d}})$ \cite{8171203}\\
\hline
VAMP&$\mathcal{O}(MRN|\mathcal{D}|KR_{\mathrm{a}}R_{\mathrm{d}})$ \cite{8171203}\\
\hline
GEC-SR&$\mathcal{O}(M^{2}RN|\mathcal{D}|^{2}K^{2}R_{\mathrm{a}}^{2}R_{\mathrm{d}}^{2})$ \cite{8320852}\\
\hline
VB-SBL&$\mathcal{O}(M^{2}RN|\mathcal{D}|^{2}K^{2}R_{\mathrm{a}}^{2}R_{\mathrm{d}}^{2})$ \cite{8310593}\\
\hline
\end{tabular}
\end{table}

\begin{table}[t]
\centering
\caption{Average numbers of iterations of (NFCFGS-CV, FCFGS-CV) for various SNRs and $B$ when $(R_{\mathrm{a}}, R_{\mathrm{d}})=(2, 2)$}\label{table_2}
\renewcommand{\arraystretch}{1.2}
\begin{tabular}{c||c||c||c||c}
\hline
SNR (dB)&$B=1$&$B=2$&$B=3$&$B=4$\\
\hline\hline
$-20$&$(11, 12)$&$(12, 13)$&$(12, 13)$&$(12, 12)$\\
\hline
$-15$&$(13, 14)$&$(13, 16)$&$(13, 16)$&$(13, 16)$\\
\hline
$-10$&$(14, 18)$&$(15, 20)$&$(15, 20)$&$(15, 21)$\\
\hline
$-5$&$(16, 22)$&$(17, 25)$&$(19, 26)$&$(18, 27)$\\
\hline
$0$&$(18, 26)$&$(21, 33)$&$(22, 35)$&$(22, 36)$\\
\hline
$5$&$(20, 33)$&$(24, 40)$&$(26, 47)$&$(27, 51)$\\
\hline
$10$&$(22, 36)$&$(26, 47)$&$(30, 57)$&$(31, 65)$\\
\hline
\end{tabular}
\end{table}

In Fig. \ref{figure_7}, the NMSEs of NFCFGS-CV and other benchmarks are provided for various SNRs with $B=1, 2, 3, 4$, while the other parameters are configured as $M=32$, $R=8$, $K=4$, $D=4$, $(L_{1}, L_{2}, L_{3}, L_{4})=(4, 2, 4, 2)$, $N_{\mathrm{t}}=40$, and $N_{\mathrm{f}}=40$. In this simulation, users with different SNRs are considered, namely $\rho_{k}/\rho_{1}=2(k-1)$ dB with $\mathrm{SNR}=1/K\cdot(\rho_{1}+\cdots+\rho_{K})$. As shown in Fig. \ref{figure_7}, NFCFGS-CV outperforms other benchmarks for all SNRs and $B$. The inferior performance of GAMP, VAMP, GEC-SR, and VB-SBL comes from the fact that these algorithms attempt to estimate the continuous parameters of $\mathbf{h}$ on the grid, which leads to the off-grid error. NFCFGS-CV, on the other hand, adopts the gridless approach to mitigate the off-grid error, thereby improving the performance. In addition, CV enables NFCFGS-CV to terminate when the minimum SE is achieved as discussed in Lemma \ref{lemma} and Corollary \ref{corollary}. The NMSEs of all algorithms, however, increase in the high SNR regime especially when $B=1$ because the magnitude information is eliminated by the coarse quantization. To alleviate the performance degradation in the high SNR regime, the quantization thresholds are configured based on the SNR when $B=2, 3, 4$, which is commonly adopted when low-resolution ADCs are used \cite{8171203, 8320852, 8310593}. An in-depth comparison between NFCFGS-CV and FCFGS-CV is conducted in the simulation in Fig. \ref{figure_10}.

In Fig. \ref{figure_8}, the NMSEs of various channel estimators are shown for different $N_{\mathrm{t}}$ with $B=1, 2, 3, 4$, and the other parameters are given as $M=32$, $R=8$, $K=4$, $D=4$, $L_{k}=2$ for all $k$, $N_{\mathrm{f}}=40$, and $\mathrm{SNR}=0$ dB. Similar to Fig. \ref{figure_7}, NFCFGS-CV is superior to other benchmarks at all $N_{\mathrm{t}}$ and $B$. In addition, note that NFCFGS-CV performs satisfactorily even when $N_{\mathrm{t}}$ is not large, which suggests that NFCFGS-CV provides low training overhead.

In Fig. \ref{figure_9}, the NMSEs of NFCFGS-CV and other benchmarks are compared at various $R$ with $B=1, 2, 3, 4$, while the other parameters are selected as $M=16$, $K=2$, $D=4$, $L_{k}=2$ for all $k$, $N=1600$, and $\mathrm{SNR}=0$ dB. In contrast to the previous simulations, $N_{\mathrm{t}}$ and $N_{\mathrm{f}}$ are configured based on $R$ to maximize the channel estimation performance. As evident in Fig. \ref{figure_9}, NFCFGS-CV outperforms all algorithms even in the extreme case of $R=1$ where only a unidirectional RF combiner is available at each frame, which reconfirms the superior performance of NFCFGS-CV.

Now, the performances and complexities of NFCFGS-CV and FCFGS-CV are compared. First, recall that NFCFGS is gridless FCFGS that refines the estimated channel over the continuum using Newton's method. Therefore, NFCFGS can be regarded as FCFGS over the infinite-resolution grid. In Fig. \ref{figure_10}, the NMSEs of NFCFGS-CV and FCFGS-CV are compared at various grid resolutions with $B=1, 2, 3, 4$, while the other parameters are set as $M=32$, $R=8$, $K=4$, $D=4$, $L_{k}=2$ for all $k$, $N_{\mathrm{t}}=40$, and $N_{\mathrm{f}}=40$. According to Fig. \ref{figure_10}, $(R_{\mathrm{a}}, R_{\mathrm{d}})=(1, 1)$ is sufficient for NFCFGS-CV because Newton's method effectively eliminates the off-grid error even when the grid is coarse. FCFGS-CV, on the other hand, requires the grid resolution to be high to offset the off-grid error for a satisfactory performance. The performance, however, saturates at $(R_{\mathrm{a}}, R_{\mathrm{d}})=(4, 4)$ because the off-grid error cannot be eliminated solely with on-grid AoA-delay estimation. This suggests that FCFGS-CV cannot approach NFCFGS-CV even when the grid resolution is high.

To compare the complexities of NFCFGS-CV and FCFGS-CV, denote the numbers of iterations for Newton's method and gradient descent method in Lines 8 and 14 of Algorithm \ref{nfcfgs_cv} to converge as $t_{\mathrm{G}}$ and $t_{\mathrm{N}}$, which are typically less than $10$. Then, the $k$-th iteration complexities of NFCFGS-CV and FCFGS-CV are shown in Table \ref{table_1}. In addition, Table \ref{table_2} shows the average numbers of iterations for NFCFGS-CV and FCFGS-CV to terminate for various SNRs and $B$ when $(R_{\mathrm{a}}, R_{\mathrm{d}})=(2, 2)$, while the other parameters are configured as in Fig. \ref{figure_10}. According to Table \ref{table_1}, the per-iteration complexities of NFCFGS-CV and FCFGS-CV are dominated by $MRN|\mathcal{D}|KR_{\mathrm{a}}R_{\mathrm{d}}$ and $RNkt_{\mathrm{G}}$ where the first term is mainly determined by the grid resolution. Therefore, the complexities of NFCFGS-CV and FCFGS-CV strongly depend on the iteration number and grid resolution, namely $R_{\mathrm{a}}$ and $R_{\mathrm{d}}$. FCFGS-CV, however, requires the grid resolution to be high to offset the off-grid error, which increases the per-iteration complexity. Moreover, Table \ref{table_2} suggests that FCFGS-CV requires more iterations to terminate because FCFGS-CV attempts to reduce the off-grid error by overestimating the number of paths, or the model order in the CS jargon, which is a phenomenon that on-grid CS algorithms typically demonstrate \cite{7491265, 6834753}. Therefore, we conclude that NFCFGS-CV outperforms FCFGS-CV in terms of both performance and complexity, as evident from Fig. \ref{figure_10} and Tables \ref{table_1} and \ref{table_2}.

Also, the complexities of NFCFGS-CV and other benchmarks are compared. According to Table \ref{table_1}, the complexities of NFCFGS-CV, GAMP, and VAMP have the same order, whereas those of GEC-SR and VB-SBL increase by a factor of $M|\mathcal{D}|KR_{\mathrm{a}}R_{\mathrm{d}}$. Therefore, NFCFGS-CV, GAMP, and VAMP are suitable for massive MIMO systems where $M$ is typically large, whereas GEC-SR and VB-SBL are infeasible.

\begin{figure}[t]
\centering
\includegraphics[width=1\columnwidth]{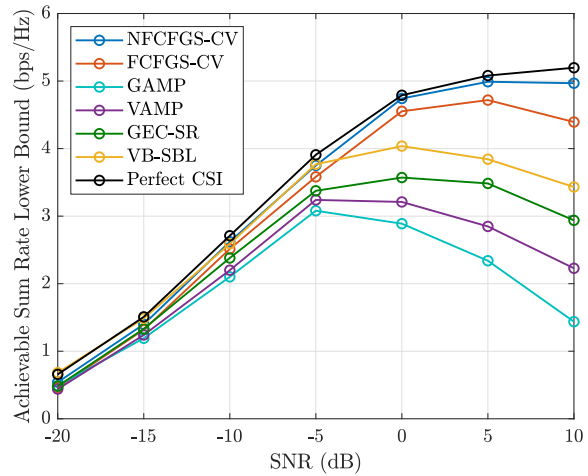}
\caption[caption]{Achievable sum rate lower bound \cite{7931630} versus SNR for $B=1$ with $M=32$, $R=8$, $K=4$, $D=4$, $(L_{1}, L_{2}, L_{3}, L_{4})=(4, 2, 4, 2)$, $N_{\mathrm{t}}=40$, $N_{\mathrm{f}}=40$, and $N_{\mathrm{c}}=9000$.}\label{figure_11}
\end{figure}

Now, the achievable sum rate lower bounds of various channel estimators are investigated based on the framework in \cite{7931630}. In this simulation, the coherence block of length $N_{\mathrm{c}}$ is partitioned to the channel estimation phase of length $N_{\mathrm{t}}(N_{\mathrm{p}}+N_{\mathrm{f}}+N_{\mathrm{s}})$ and uplink data transmission phase. In the uplink data transmission phase, the data is transmitted in blocks with prefix and suffix guard intervals to avoid inter-block interference. The base station and users employ the block diagonalization-based combiners and precoders \cite{1261332}, which are obtained from the estimated channel. For one-bit ADCs, the achievable sum rate lower bound $(N_{\mathrm{c}}-N_{\mathrm{t}}(N_{\mathrm{p}}+N_{\mathrm{f}}+N_{\mathrm{s}}))/N_{\mathrm{c}}\cdot R_{\mathrm{sum}}$ can be evaluated based on \cite{7931630}. In essence, the Bussgang theorem is applied to linearize one-bit ADCs, whose covariance matrix of the quantization noise is given by the arcsine law. Then, the achievable sum rate lower bound is obtained using the worst noise assumption, which treats the quantization noise as Gaussian with the same covariance matrix. The details are omitted for conciseness, and the interested reader is referred to \cite{7931630}.

In Fig. \ref{figure_11}, the achievable sum rate lower bounds of various channel estimators and the perfect channel state information (CSI) case are shown for various SNRs with $B=1$ and $N_{\mathrm{c}}=9000$, while the other parameters are configured as in Fig. \ref{figure_7}. According to Fig. \ref{figure_11} in conjunction with the first subplot in Fig. \ref{figure_7}, NFCFGS-CV is again superior to other benchmarks for all SNRs, which shows that lower NMSE in the channel estimation phase guarantees higher reliability in the data transmission phase. As discussed earlier, the degradation in the high SNR regime is due to the coarse quantization of one-bit ADCs.

\section{Conclusion}\label{section_6}
The NFCFGS-CV-based channel estimator was proposed for wideband mmWave massive MIMO systems with hybrid architectures and low-resolution ADCs. The channel was formulated based on the spatial wideband effect in the discrete time domain. As a means to mitigate inter-frame, inter-user, and inter-symbol interferences in the presence of the spatial wideband effect, the requirements on the training signal design were investigated. To estimate the channel, the MAP criterion was adopted, which was solved using the NFCFGS algorithm. In addition, the CV technique was applied to deal with the unknown number of paths. The analysis on CV revealed that the minimum SE can be detected by CV when the number of the CV data is sufficiently large. The simulation results showed that NFCFGS-CV is accurate and efficient, which outperformed state-of-the-art channel estimators.

\bibliographystyle{IEEEtran}
\bibliography{refs_all}

\begin{thebibliography}{10}
\providecommand{\url}[1]{#1}
\csname url@samestyle\endcsname
\providecommand{\newblock}{\relax}
\providecommand{\bibinfo}[2]{#2}
\providecommand{\BIBentrySTDinterwordspacing}{\spaceskip=0pt\relax}
\providecommand{\BIBentryALTinterwordstretchfactor}{4}
\providecommand{\BIBentryALTinterwordspacing}{\spaceskip=\fontdimen2\font plus
\BIBentryALTinterwordstretchfactor\fontdimen3\font minus
  \fontdimen4\font\relax}
\providecommand{\BIBforeignlanguage}[2]{{%
\expandafter\ifx\csname l@#1\endcsname\relax
\typeout{** WARNING: IEEEtran.bst: No hyphenation pattern has been}%
\typeout{** loaded for the language `#1'. Using the pattern for}%
\typeout{** the default language instead.}%
\else
\language=\csname l@#1\endcsname
\fi
#2}}
\providecommand{\BIBdecl}{\relax}
\BIBdecl

\bibitem{6894453}
A.~L. {Swindlehurst} \emph{et~al.}, ``{Millimeter-wave massive MIMO: The next
  wireless revolution?}'' \emph{IEEE Commun. Mag.}, vol.~52, no.~9, pp. 56--62,
  Sept. 2014.

\bibitem{7370753}
R.~{Méndez-Rial} \emph{et~al.}, ``{Hybrid MIMO architectures for millimeter
  wave communications: Phase shifters or switches?}'' \emph{IEEE Access},
  vol.~4, pp. 247--267, 2016.

\bibitem{8030501}
A.~F. {Molisch} \emph{et~al.}, ``{Hybrid beamforming for massive MIMO: A
  survey},'' \emph{IEEE Commun. Mag.}, vol.~55, no.~9, pp. 134--141, Sept.
  2017.

\bibitem{7307134}
L.~{Fan} \emph{et~al.}, ``{Uplink achievable rate for massive MIMO systems with
  low-resolution ADC},'' \emph{IEEE Commun. Lett.}, vol.~19, no.~12, pp.
  2186--2189, Dec. 2015.

\bibitem{7894211}
S.~{Jacobsson} \emph{et~al.}, ``{Throughput analysis of massive MIMO uplink
  with low-resolution ADCs},'' \emph{IEEE Trans. Wireless Commun.}, vol.~16,
  no.~6, pp. 4038--4051, June 2017.

\bibitem{7420605}
J.~{Zhang} \emph{et~al.}, ``{On the spectral efficiency of massive MIMO systems
  with low-resolution ADCs},'' \emph{IEEE Commun. Lett.}, vol.~20, no.~5, pp.
  842--845, May 2016.

\bibitem{8171203}
J.~{Mo}, P.~{Schniter}, and R.~W. {Heath}, ``{Channel estimation in broadband
  millimeter wave MIMO systems with few-bit ADCs},'' \emph{IEEE Trans. Signal
  Process.}, vol.~66, no.~5, pp. 1141--1154, Mar. 2018.

\bibitem{8320852}
H.~{He}, C.~{Wen}, and S.~{Jin}, ``{Bayesian optimal data detector for hybrid
  mmWave MIMO-OFDM systems with low-resolution ADCs},'' \emph{IEEE J. Sel.
  Topics Signal Process.}, vol.~12, no.~3, pp. 469--483, June 2018.

\bibitem{8310593}
Y.~{Ding}, S.~{Chiu}, and B.~D. {Rao}, ``{Bayesian channel estimation
  algorithms for massive MIMO systems with hybrid analog-digital processing and
  low-resolution ADCs},'' \emph{IEEE J. Sel. Topics Signal Process.}, vol.~12,
  no.~3, pp. 499--513, June 2018.

\bibitem{8789658}
I.~{Kim} and J.~{Choi}, ``{FCFGS-CV-based channel estimation for wideband
  mmWave massive MIMO systems with low-resolution ADCs},'' \emph{IEEE Wireless
  Commun. Lett.}, vol.~8, no.~6, pp. 1648--1652, Dec. 2019.

\bibitem{7491265}
B.~{Mamandipoor}, D.~{Ramasamy}, and U.~{Madhow}, ``{Newtonized orthogonal
  matching pursuit: Frequency estimation over the continuum},'' \emph{IEEE
  Trans. Signal Process.}, vol.~64, no.~19, pp. 5066--5081, Oct. 2016.

\bibitem{8515242}
C.~{Wang} \emph{et~al.}, ``{Gridless channel estimation for mixed one-bit
  antenna array systems},'' \emph{IEEE Trans. Wireless Commun.}, vol.~17,
  no.~12, pp. 8485--8501, Dec. 2018.

\bibitem{4385788}
J.~A. {Tropp} and A.~C. {Gilbert}, ``{Signal recovery from random measurements
  via orthogonal matching pursuit},'' \emph{IEEE Trans. Inf. Theory}, vol.~53,
  no.~12, pp. 4655--4666, Dec. 2007.

\bibitem{8354789}
B.~{Wang} \emph{et~al.}, ``{Spatial- and frequency-wideband effects in
  millimeter-wave massive MIMO systems},'' \emph{IEEE Trans. Signal Process.},
  vol.~66, no.~13, pp. 3393--3406, July 2018.

\bibitem{8647494}
B.~Wang \emph{et~al.}, ``{Wideband channel estimation for mmWave massive MIMO
  systems with beam squint effect},'' in \emph{2018 IEEE Global Commun. Conf.
  (GLOBECOM)}, Dec. 2018, pp. 1--6.

\bibitem{7472305}
C.~{Mollén} \emph{et~al.}, ``{One-bit ADCs in wideband massive MIMO systems
  with OFDM transmission},'' in \emph{2016 IEEE Int. Conf. on Acoust., Speech,
  Signal Process. (ICASSP)}, Mar. 2016, pp. 3386--3390.

\bibitem{1069}
R.~T. {Compton}, ``{The bandwidth performance of a two-element adaptive array
  with tapped delay-line processing},'' \emph{IEEE Trans. Antennas Propag.},
  vol.~36, no.~1, pp. 5--14, Jan. 1988.

\bibitem{doi:10.1137/090759574}
\BIBentryALTinterwordspacing
S.~Shalev-Shwartz, N.~Srebro, and T.~Zhang, ``{Trading accuracy for sparsity in
  optimization problems with sparsity constraints},'' \emph{SIAM J. Opt.},
  vol.~20, no.~6, pp. 2807--2832, 2010. [Online]. Available:
  \url{https://doi.org/10.1137/090759574}
\BIBentrySTDinterwordspacing

\bibitem{4301267}
P.~{Boufounos}, M.~F. {Duarte}, and R.~G. {Baraniuk}, ``{Sparse signal
  reconstruction from noisy compressive measurements using cross validation},''
  in \emph{2007 IEEE/SP 14th Workshop on Stat. Signal Process.}, Aug. 2007, pp.
  299--303.

\bibitem{tse2005fundamentals}
D.~Tse and P.~Viswanath, \emph{{Fundamentals of wireless communication}}.\hskip
  1em plus 0.5em minus 0.4em\relax Cambridge University Press, 2005.

\bibitem{7400949}
R.~W. {Heath} \emph{et~al.}, ``{An overview of signal processing techniques for
  millimeter wave MIMO systems},'' \emph{IEEE J. Sel. Topics Signal Process.},
  vol.~10, no.~3, pp. 436--453, Apr. 2016.

\bibitem{7961152}
K.~{Venugopal} \emph{et~al.}, ``{Channel estimation for hybrid
  architecture-based wideband millimeter wave systems},'' \emph{IEEE J. Sel.
  Areas Commun.}, vol.~35, no.~9, pp. 1996--2009, Sept. 2017.

\bibitem{7439790}
J.~{Choi}, J.~{Mo}, and R.~W. {Heath}, ``{Near maximum-likelihood detector and
  channel estimator for uplink multiuser massive MIMO systems with one-bit
  ADCs},'' \emph{IEEE Trans. Commun.}, vol.~64, no.~5, pp. 2005--2018, May
  2016.

\bibitem{5456454}
A.~{Mezghani}, F.~{Antreich}, and J.~A. {Nossek}, ``{Multiple parameter
  estimation with quantized channel output},'' in \emph{2010 Int. ITG Workshop
  on Smart Antennas (WSA)}, Feb. 2010, pp. 143--150.

\bibitem{boyd2004convex}
S.~Boyd and L.~Vandenberghe, \emph{{Convex optimization}}.\hskip 1em plus 0.5em
  minus 0.4em\relax Cambridge University Press, 2004.

\bibitem{7458188}
J.~{Lee}, G.~{Gil}, and Y.~H. {Lee}, ``{Channel estimation via orthogonal
  matching pursuit for hybrid MIMO systems in millimeter wave
  communications},'' \emph{IEEE Trans. Commun.}, vol.~64, no.~6, pp.
  2370--2386, June 2016.

\bibitem{murphy2012machine}
K.~P. Murphy, \emph{{Machine learning: A probabilistic perspective}}.\hskip 1em
  plus 0.5em minus 0.4em\relax MIT Press, 2012.

\bibitem{10.1007/BFb0067700}
J.~J. Mor{\'e}, ``{The Levenberg-Marquardt algorithm: Implementation and
  theory},'' in \emph{Numerical Analysis}, G.~A. Watson, Ed.\hskip 1em plus
  0.5em minus 0.4em\relax Berlin, Heidelberg: Springer Berlin Heidelberg, 1978,
  pp. 105--116.

\bibitem{5319752}
R.~{Ward}, ``{Compressed sensing with cross validation},'' \emph{IEEE Trans.
  Inf. Theory}, vol.~55, no.~12, pp. 5773--5782, Dec. 2009.

\bibitem{hastie2009elements}
T.~Hastie, R.~Tibshirani, and J.~Friedman, \emph{{The elements of statistical
  learning: Data mining, inference, and prediction}}.\hskip 1em plus 0.5em
  minus 0.4em\relax Springer Science \& Business Media, 2009.

\bibitem{eldar2012compressed}
Y.~C. Eldar and G.~Kutyniok, \emph{{Compressed sensing: Theory and
  applications}}.\hskip 1em plus 0.5em minus 0.4em\relax Cambridge University
  Press, 2012.

\bibitem{1614066}
D.~L. {Donoho}, ``{Compressed sensing},'' \emph{IEEE Trans. Inf. Theory},
  vol.~52, no.~4, pp. 1289--1306, Apr. 2006.

\bibitem{bai1993limit}
Z.~{Bai} and Y.~{Yin}, ``{Limit of the smallest eigenvalue of a large
  dimensional sample covariance matrix},'' \emph{The Annals of Probability},
  vol.~21, pp. 1275--1294, 1993.

\bibitem{cover2012elements}
T.~M. Cover and J.~A. Thomas, \emph{{Elements of information theory}}.\hskip
  1em plus 0.5em minus 0.4em\relax John Wiley \& Sons, 2006.

\bibitem{6834753}
M.~R. {Akdeniz} \emph{et~al.}, ``{Millimeter wave channel modeling and cellular
  capacity evaluation},'' \emph{IEEE J. Sel. Areas Commun.}, vol.~32, no.~6,
  pp. 1164--1179, June 2014.

\bibitem{7931630}
Y.~{Li} \emph{et~al.}, ``{Channel estimation and performance analysis of
  one-bit massive MIMO systems},'' \emph{IEEE Trans. Signal Process.}, vol.~65,
  no.~15, pp. 4075--4089, 2017.

\bibitem{1261332}
Q.~H. {Spencer}, A.~L. {Swindlehurst}, and M.~{Haardt}, ``{Zero-forcing methods
  for downlink spatial multiplexing in multiuser MIMO channels},'' \emph{IEEE
  Trans. Signal Process.}, vol.~52, no.~2, pp. 461--471, 2004.

\end{thebibliography}

\end{document}